\newtheorem{theorem}{Theorem}[section]
\newtheorem{lemma}[theorem]{Lemma}
\newtheorem{observation}[theorem]{Observation}
{\theoremstyle{remark} }
{\theoremstyle{definition} \newtheorem{definition}[theorem]{Definition}}
\newcommand{\COFL}{\textsf{COFL-Line}}
\newcommand{\DCOFL}{\textsf{COFL-Line-Dec}}
\newcommand{\SCOFL}{\textsf{COFL-Line-Sq}}
\newcommand{\DSCOFL}{\textsf{COFL-Line-Sq-Dec}}
\newcommand{\CCOFL}{\textsf{COFL-Circ}}
\newcommand{\DCCOFL}{\textsf{COFL-Circ-Dec}}
\newcommand{\MOFL}{\textsf{MOFL}}
\newcommand{\SMOFL}{\textsf{Simple-MOFL}}
\renewcommand{\circle}{\mathbb{C}}
\newcommand{\Finterval}{\mathcal{I}}
\newcommand{\Index}{\text{index}}
\date{}
\title{Efficient Algorithms for Obnoxious Facility Location on a Line Segment or Circle}
\author{Bowei Zhang}
\begin{document}
\makeatletter
\maketitle %
\begin{abstract}
    We study different restricted variations of the obnoxious facility location problem on a plane. 
The first is the constrained obnoxious facility location on a line segment (\COFL) problem. 
In this problem, we are given a line segment $\overline{pq}$, a set $P=\{p_1,p_2,\cdots,p_n\}$ of $n$ points in the plane and a given integer $k$.
Our goal is to pack $k$ maximum-radius congruent disks that are centered on $\overline{pq}$ and do not include any points in $P$. 
We provide an efficient algorithm for this problem
that executes in $O(n ^ 2 \log k + n \log k \log  (n^2 + k))$ time. 
Our result improves on the best known result of 
$O((nk)^2 \log(nk) + (n + k) \log (nk))$ time
obtained by Singireddy and Basappa\cite{singireddy2022dispersing}.
We also study the same problem where the facilities 
must be placed on a given circle
(the constrained obnoxious facility location on a circle (\CCOFL) problem).
We provide an efficient algorithm for this problem
that executes in $O(n ^ 2 \log k + n \log k \log  (n^2 + k))$ time. 
Our result improves on the best known result of 
$O((nk)^2 \log(nk) + (n + k) \log (nk))$ time
obtained by Singireddy and Basappa\cite{singireddy2022dispersing}.
The third problem we study
is the min-sum obnoxious facility location (\MOFL) problem.
In this problem, we are given a line segment $\overline{pq}$ 
and a set $P=\{p_1,p_2,\cdots,p_n\}$ of $n$ weighted points in the plane. For a given integer $k$ and real number $\lambda>0$, our goal is to pack $k$ non-overlapping congruent disks of radius $\lambda$ that are centered on $\overline{pq}$ such that the sum of the weights of the points in $P$ covered by the union of these $k$ disks is minimized. We provide an efficient algorithm
that executes in $O(nk\cdot \alpha(nk) \log^3 {nk})$ time, where $\alpha(.)$ is the inverse Ackermann function.
The best known previous result is an $O(n^3k)$ time obtained by 
Singireddy and Basappa\cite{singireddy2022dispersing}.
\end{abstract}

\section{Introduction}

The study of facility location problems is an important branch of operations research and algorithm design. Such problems are typically concerned with finding optimum locations for facilities to serve a set of demand points (clients). In general, we would like to place facilities such that they are close to demand points \cite{hochbaum1985best}. The most common objectives include minimizing the total service costs(i.e., the uncapacitated facility location problem(UFLP)) \cite{hajiaghayi2003facility} or covering all demand points within 
a certain radius (i.e., the $k$-center problem) \cite{hochbaum1985best}. The mathematical model of the facility location problem can be applied to several other problems, including clustering and the lock-box problem. 
However, in certain applications, facilities can be obnoxious and must be placed as far as possible away from other facilities such as hospitals, fire stations, post offices, schools,  \cite{church1978locating}, and other obnoxious facilities. 
In such applications, such facilities should typically be located on the sides of highways because of heavy transportation requirements. 
This motivates the problems we study in this paper, that is how to place
obnoxious facilities on a line segment or circle.

First, we review existing models for 
obnoxious facility placement. 
Church and Garfinkel \cite{church1978locating} first introduced the obnoxious $p$-median problem. 
In this problem, the positions of the non-obnoxious facilities are given.
The goal is to locate $p$ obnoxious facilities to maximize the cumulative minimum distance from the non-obnoxious facilities to these $p$ obnoxious facilities. 
The obnoxious $p$-median problem has been proved to be NP-hard \cite{tamir1991obnoxious}. 
Herr{\'a}n, Alberto and Colmenar\cite{herran2020parallel} demonstrated that
the obnoxious $p$-median problem can be formulated as an integer linear program and obtained the best solution in 137 of the 144 instances in the benchmark.
Drezner and Wesolowsky \cite{drezner1995obnoxious} provided
another formulation of the obnoxious facility location problem.
Their goal was to locate an obnoxious facility that is as far as possible from the arcs and nodes of a given network. 
They provided an $(1 - \epsilon)$-approximation algorithm that executes in $O(m^3 \log (1/ \epsilon ))$ time for the weighted version of the problem, where $m$ is the number of arcs in the network. 
Michael \cite{segal2003placing} improved the execution time to $O(m^2 \log n \log (1/ \epsilon ))$ by modifying the network to a rectilinear network, where $n$ is the number of nodes in the network. 
Singireddy and Basappa\cite{singireddy2022dispersing} provided a
$O((nk)^2 \log(nk) + (n + k) \log (nk))$ time algorithm for the constrained obnoxious facility location problem, where $n$ is the number of non-obnoxious and $k$ is the number of the obnoxious facilities.

Another popular variation is the {\em minimum-sum obnoxious facility location} (\MOFL) problem. We are given a set of weighted points corresponding to
non-obnoxious facilities. We must place $k$ obnoxious facilities such that they minimize the total weight of the points covered. 
The \MOFL\ problem is motivated by applications where we must place a small number of obnoxious facilities and minimize the total weight of the non-obnoxious facilities that can be influenced. 
These obnoxious facilities influence the area around them, where each area can be approximated by a disk (or other shapes). 
Dreznre and Wesolowsky \cite{drezner1995obnoxious} first studied this problem where the requirement was to place a single facility. 
They modeled the area influenced by this obnoxious facility as a rectangle or disk and provided an algorithm for solving both cases in $O(n^2)$ time. Singireddy and Basappa \cite{singireddyconstrained} improved this to $O(n\log n)$ time and presented a dynamic
programming solution that solves the case of placing  $k$ obnoxious facilities using $O(n^3k)$ time.

\subsection{Problem Definitions}

In this study, we focus on the constrained obnoxious facility location problem,
on either a line segment (\COFL) or circle (\CCOFL),in addition to the  \MOFL\ problem.
For a point $p\in \mathbb{R}^2$, we use $B(p, r)$ to denote a disk
of radius $r$ centered at $p$.
We use $d(p,q)$ to denote the distance between point $p$ and point $q$.
We use $[n]$ to denote the set $\{1,\ldots, n\}$.

Next, we formally define these problems.

\begin{definition}
Constrained Obnoxious Facility Location on a Line (\COFL) Problem.
We are given a set $P = \{p_1,p_2,\cdots, p_n\}$ of $n$ demand points in the plane, a line segment $\overline{pq}$, and a positive integer $k$.
Without loss of generality, we can suppose that $\overline{pq}$ lies on the x-axis.
Our goal is to place $k$ facilities (also called centers) $C=\{c_1,\ldots, c_k\}$ on $\overline{pq}$
such that these (obnoxious) facilities are far away from the demand points 
and from each other.
Formally, our goal is to maximize the radius $\lambda$ defined as
$$
\lambda=\min\{\min_{i\in [n]}\min_{j\in [k]} d(p_i, c_j), 
\alpha \cdot \min_{i\in [k-1]} d(c_i, c_{i+1})
\}
$$
where $\alpha>0$ is a given fixed coefficient (see Figure \ref{fig:cofl-line} for an example).
\end{definition}

\begin{center}
 \begin{figure}[t]
  \includegraphics[width=1\textwidth]{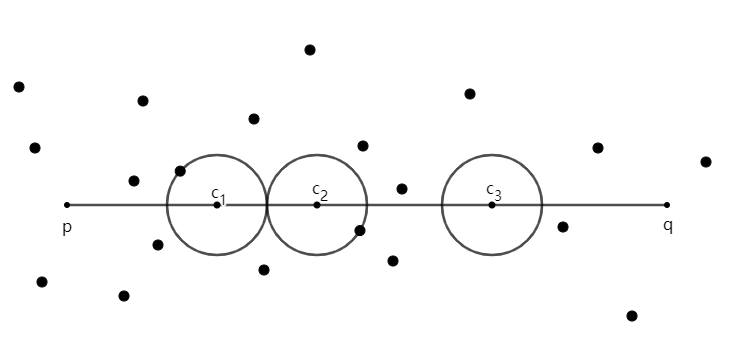}
  \caption{Optimal solution for example of \COFL\ when $k=3$ and $\alpha = 1/2$. $c_1,c_2,c_3$ are the centers that we place. The larger black nodes are the demand points. }
  \label{fig:cofl-line}
\end{figure}
\end{center}

Note that our definition is marginally more general than that defined by Singireddy and Basappa\cite{singireddy2022dispersing}.
In particular, when $\alpha=1/2$, our problem is to place 
$k$ non-overlapping disks of the same radius $r$ centered on the line segment $\overline{pq}$, where no point in $P$ is contained in any disk,
and the the radius $r$ is maximized.

We also consider the following variant
where a disk is replaced by a square
(i.e., we measure distance using $\ell_\infty$ norm instead
of $\ell_2$ norm).

\begin{definition}
(Constrained Obnoxious Facility Location with Squares  (\SCOFL))
The input of \SCOFL\ is exactly the same as that of \COFL. 
The only difference is that we want to pack $k$ maximum-size axis-aligned non-overlapping squares centered on $\overline{pq}$ such that no point of $P$ lies inside any of these squares. The size of a square is defined as its side length. 
\end{definition}

In this study, we first solve \SCOFL, which is simpler than \COFL.

We next define the \CCOFL. The problem is virtually the same as \COFL, except that
the facilities must be placed on a predetermined circle, instead of a line 
segment.

\begin{definition}
Constrained Obnoxious Facility Location on a Circle (\CCOFL).
We are given a set $P=\{p_1, p_2, \cdots, p_n\}$ of $n$ demand points in the plane, a predetermined circle $\circle$ with radius $r_c$, and a positive integer $k$.
We must locate $k$ facility sites on circle $\circle$. 
Our goal is to maximize the radius $\lambda$ defined as
$$
\lambda=\min\{\min_{i\in [n]}\min_{j\in [k]} d(p_i, c_j), 
\alpha \cdot \min_{i\in [k]} d(c_i, c_{i+1})
\}
$$
where $\alpha>0$ is a given fixed coefficient
and $c_{k+1}$ is understood as $c_1$(see Figure \ref{fig:ccofl} for an example).
\end{definition}

   \begin{center}
 \begin{figure}[h]
\centerline{  \includegraphics[width=0.8\textwidth]{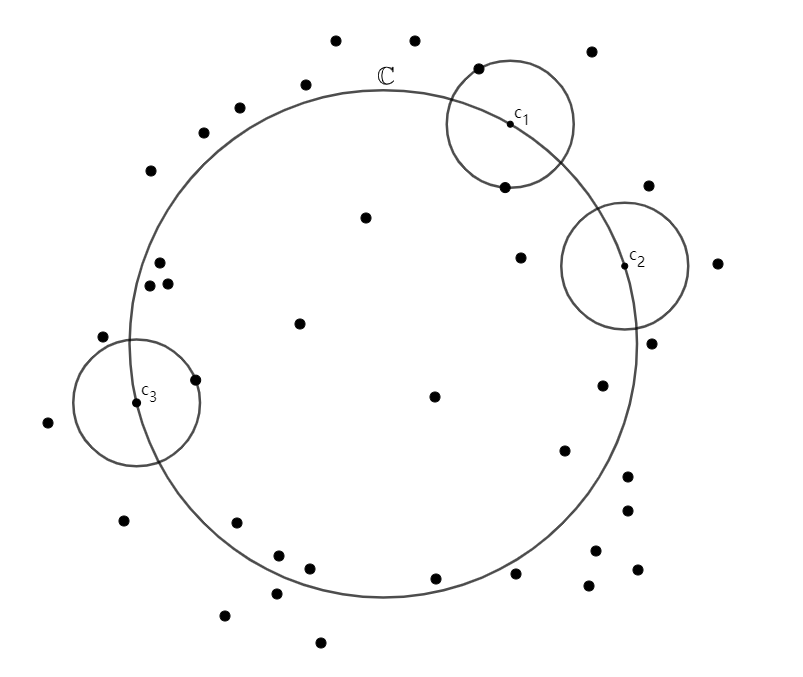}}
  \caption{Optimal solution for example of \CCOFL\ when $k=3$ and $\alpha = 1$. $c_1,c_2,c_3$ are the centers that we place. The larger black nodes are the demand points.}
  \label{fig:ccofl}
\end{figure}
\end{center}


Next, we formally define the \MOFL\ problem. In this problem, the impact radius of each facility is fixed,
and our objective is to minimize the total impact.

\begin{definition}
Minsum Obnoxious Facility Location (\MOFL) Problem.
We are given a set $P = \{p_1,p_2,\cdots, p_n \}$  of $n$ demand points in the plane with weight $\{w_1,w_2,\cdots,w_n\}$, a line segment $\overline{pq}$ and positive integer $k$, and a positive real number $\lambda > 0$. 
Our goal is to pack $k$  disks $D_1,\cdots,D_k$ of radius $\lambda$ centered on $\overline{pq}$ such that 
$$
\sum_{j=1}^k \sum_{i : p_i\in D_j} w_i
$$
is minimized 
(i.e., the total weight of the points in $P$ covered by the $k$  disks is minimized)
under the condition that $\min\limits_{i\in [k]}\{d(c_i, c_{i+1})\} \ge \alpha\cdot\lambda$ ($c_{k+1}$ is understood as $c_1$ here), where $\alpha>0$ is a given positive constant(see Figure \ref{fig:mofl} for an example).
\end{definition}

\begin{center}
 \begin{figure}[h]
  \includegraphics[width=1\textwidth]{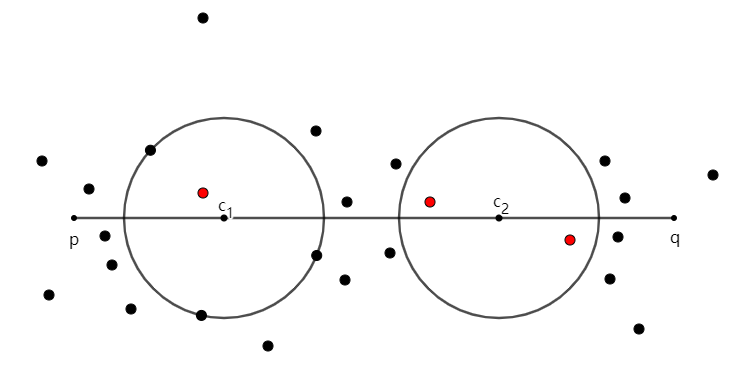}
  \caption{Optimal solution for example of \MOFL. The weight of all the demand nodes is set as ``$1$”. $c_1,c_2$ are centers that we place. The three red nodes are the demand points that are covered by the disks. The minimum total weight of the demand points that are covered by disks is three.  }
  \label{fig:mofl}
\end{figure}
\end{center}

\subsection{Previous Results and Our Contributions}

In this section, we summarize the previous results and our contributions to the above problems.

\begin{enumerate}
    \item \SCOFL\ : 
    We first consider the \SCOFL\ problem, which is easier than the \COFL\ problem. Singireddy and Basappa \cite{singireddy2022dispersing}  provided an $O((nk)^2 \log(nk) + (n + k) \log (nk))$ exact algorithm for the \SCOFL\ problem. 
    We present an $O(n \log n \log (n + k) )$ time algorithm for this problem, using the matrix search technique developed in \cite{2011Representing, chen2013algorithms}.
    
    \item \COFL\ : 
    For the $\COFL$ problem, Singireddy and Basappa\cite{singireddyconstrained} first designed a $(1-\epsilon)$-factor approximation algorithm that executes in $O((n+k) \log (\frac{||pq||}{2(k-1)\epsilon}))$ time, where $k$ is the number of obnoxious facilities and $pq$ is the segment on which the obnoxious facilities can be placed. Then, Singireddy and Basappa \cite{singireddy2022dispersing} improved their algorithm and provided an $O((nk)^2 \log(nk) + (n + k) \log (nk))$ exact algorithm based on a binary search on all candidates. 
    They also claimed another $O((n + k)^2)$ time algorithm using parametric search.
    However, their algorithm is incorrect because they do not consider the interference of adjacent intervals (i.e., placing a center in one interval could influence the placement in the adjacent interval, and hence should not be considered independently). 
    
   For the \COFL\ problem, we first consider the corresponding following decision problem:
   for a fixed radius $\lambda$ of the disks, compute the maximum number of centers that we can place such that no demand point is contained in any of these disks.
   We demonstrate that the decision problem can be solved in $O(n \log n)$ time. 
   Then, using the matrix search technique
   developed in \cite{2011Representing, chen2013algorithms}, we provide an $O(n^2 \log k + n \log k \log  (n^2 + k))$ time algorithm for \COFL.
    
    \item \CCOFL\ : 
    The \CCOFL\ problem is similar to the \COFL\ problem. The main difference is 
    that the decision algorithm for \CCOFL\ problem is more difficult to solve. 
    A naive solution that enumerates all starting points in the circle would require $O(n^2\log n)$ time. Singireddy and Basappa \cite{singireddy2022dispersing} offered a $(1-\epsilon)$-approximation algorithm for the \CCOFL\ problem in $O(n(n+k) \log (\frac{||pq||}{2(k-1)\epsilon}))$ time by executing their $(1-\epsilon)$-approximation algorithm for \COFL\ problem $n$ times (each for a different starting point).
    We design an efficient $O(n \log n \log k)$ time algorithm for the decision version of \CCOFL\ problem using a persistent segment tree.
    For the optimization problem  \CCOFL, we present an $O(n ^ 2 \log k + n (\log^2 n + \log k) \log(n^2 + k) )$ time exact algorithm.

    \item \MOFL\ : 
    For the \MOFL\ problem,  Singireddy and Basappa~\cite{singireddyconstrained} first
    presented a dynamic programming solution that executes in $O(n^3k)$ time, where $k$ is the number of obnoxious facilities. For the special case $k=1$, they provided an $O(n\log n)$ time algorithm.
    
    For the \MOFL\ problem, we demonstrate that we can transform it to a minimum-weight $k$-link path problem \cite{aggarwal1993finding}. We prove that the number of possible positions of the centers
    can be limited to $O(nk)$, and provide an 
    $O(nk\cdot \alpha(nk) \log^3 {nk})$ time algorithm for this problem.
\end{enumerate}

We summarize these results in the Table \ref{result}.
\begin{table}[]
\caption{Previous results and our results}
\label{result}
\begin{tabular}{|l|l|l|}
\hline
             & Singireddy et al.’s result                                                                                                        & Our result                                    \\ \hline
COFL-Line-Sq & $O((nk)^2 \log(nk) + (n + k) \log (nk))$                                                                                          & $O(n \log n \log (n + k) )$                   \\ \hline
COFL-Line    & $O((nk)^2 \log(nk) + (n + k) \log (nk))$                                                                                          & $O(n^2 \log k + n \log k \log  (n^2 + k))$    \\ \hline
COFL-Circ    & \begin{tabular}[c]{@{}l@{}}$O(n(n+k) \log (\frac{||pq||}{2(k-1)\epsilon}))$\\ $(1-\epsilon)$-approximation algorithm\end{tabular} & $O(n ^ 2 \log k + n (\log^2 n + \log k) \log(n^2 + k) )$ \\ \hline
MOFL         & $O(n^3k)$                                                                                                                         & $O(nk\cdot \alpha(nk) \log^3 {nk})$           \\ \hline
\end{tabular}
\end{table}

\subsection{Other Related Work}
Many variations of facility location problem including obnoxious facility location problem have been studied in the literature. 

\subsubsection{Related Obnoxious Facility
Location Problem}
Most papers about obnoxious facility location problem are modeled as chosen locations for the obnoxious facilities from a given set. 
But consider the locations of obnoxious facilities in reality, it can be far from the clients and from other obnoxious facilities and may not be limited to a set of potential locations.
We introduce literature such that consider models locating facilities in a given region and using euclidean distance here.
Shamos and Hoey\cite{1975Closest} introduce the first result of single obnoxious facility models. They find the largest circle that does not cover any points in a given set. However, they only take it as a geometrical problem. The center of the circle is the optimal location of an obnoxious facility and the radius of the circle is the maximal shortest distance from the obnoxious facility to any points in the given set.
Melachrinoudis and Cullinane\cite{mela1985} considered adding an extra obnoxious facility when some obnoxious facilities already exist under the condition that the extra obnoxious facility must be located outside circles centered at the existing facilities. 
D{\'\i}az-B{\'a}{\~n}ez et al.\cite{diaz2006locating} analyzed the problem in three-dimensional space such that
placing an “obnoxious” plane maximizes the minimum distance to a given set of communities. Suzuki et al.\cite{https://doi.org/10.48550/arxiv.2008.04386}
extend the single obnoxious facility model with weighted distance and give an optimal solution algorithm.

Drezner et al.\cite{drezner2019planar} provided a solution approximate to the multiple obnoxious-facilities problem based on Voronoi points. The problem is to place $p$ obnoxious facilities such that maximize the shortest distance between communities and facilities. The distance between each pair of facilities must be at least $D$. 
A similar formulation was proposed by Welch et al.\cite{welch2006multifacility}, who gave a solution that proposed optimality under evaluation on a set of randomly generated problems of up to five facilities and 120 communities.

\subsubsection{Related Facility Location Problem}
The metric uncapacitated facility location problem(UFLP) is the most basic facility location problem. UFLP has many applications in a large number of settings\cite{cornuejols1983uncapicitated} and also can be applied to more complicated location models. In the UFLP, We are given a set $F$ of the location of facilities, a set $C$ of clients. The cost for opening facility at location $i \in F$ is $f_i$. The cost for connecting client $j \in C$ to a facility that locate at $i \in F$ is $c_{ij}$. 
Our goal is to choose a subset of the locations for opening facilities in $F$, and connect each client in $C$ to an open facility so that the total cost for opening facilities and connecting each client to those facilities is minimized.
After Shmoys, Tardos and Aardal \cite{shmoys1997approximation} introduce the first constant factor approximation algorithm for UFLP, many constant factor approximation algorithms have been proposed. We summarize those results in table \ref{tab-1}.
\begin{table}[!h]
\caption{Result of approximation algorithm for UFLP(approx. factor is short for approximation factor)}
\label{tab-1}
\begin{tabular}{|l|l|l|l|}
\hline
approx. factor & technique                                   & running time               & reference         \\ \hline
$O(\ln n_c)$         & greedy algorithm                            & $O(n^3)$                   & Hochbaum\cite{hochbaum1982heuristics}          \\
$5+\epsilon$         & local search                                & $O(n^6 \log(n/ \epsilon))$ & Korupolu et al.\cite{korupolu2000analysis}   \\
3                    & primal-dual method                          & $O(n^2\log n)$             & Jain and Vazirani\cite{jain2001approximation} \\
1.861                & greedy algorithm                            & $O(n^2\log n)$             & Mahdian et al.\cite{mahdian2001greedy,jain2003greedy}    \\
1.853                & primal-dual method with greedy & $O(n^3)$                   & Charikar and Guha\cite{charikar2005improved} \\
1.61                 & greedy algorithm                            & $O(n^3)$                   & Jain et al.\cite{jain2002new,jain2003greedy}       \\
1.52                 & greedy algorithm with cost scaling          & $\tilde{O}(n)$             & Mahdian et al.\cite{mahdian2006approximation}\\
\hline 
\end{tabular}
\end{table}

\section{Preliminaries}
\subsection{Parametric Search}
\label{psearch}
In the design and analysis of algorithms for combinatorial optimization,  Megiddo\cite{megiddo1983applying} introduced parametric search as a technique that transforms a decision algorithm (if the optimization problem holds constraints for a given value) into an optimization algorithm (find the optimal solution). It is commonly used to solve optimization problems in computational geometry.

The basic idea of a parametric search is to simulate a test algorithm that takes numerical parameter X as the input. We suppose this test algorithm takes the optimal solution $X^{*}$ as its input. 

This test algorithm should be discontinuous when $X=X^{*}$. We only check the parameter $X$ by a simple comparisons of $X$ with other given values or test the sign of low-degree polynomial functions of X(could be generated by an observation).
Then, we must simulate each of these comparisons or tests when the value of $X$ is unknown. 
Therefore, we require another decision algorithm (denoted as the second algorithm) to simulate each comparison. The second algorithm uses another numerical parameter $Y$ as input and determines if $Y > X^*$, $Y < X^*$, or $Y = X^*$.

The second algorithm itself also can be used as the test algorithm for it is actually discontinuous at $X^{*}$, whereas in other applications we use other test algorithms (a comparison sorting algorithm is commonly used). In the advanced versions of the parametric search technique, we use a parallel algorithm as the test algorithm. Because we can group the comparisons that must be simulated into batches, we can significantly reduce the number of instantiations of the decision algorithm.

Megiddo\cite{megiddo1983applying} introduced a parallel sorting scheme that can be used for parametric searches.
We denote the execution time of the second algorithm(decision algorithm) $A$ as $T_A$. The parallel version of $A$, denoted by $A_p$, uses $P$ processors and executes in $T_p$ parallel steps. Then, we can use the binary search approach to resolve the comparisons in each parallel step. The total cost of the parametric search using this parallel sorting scheme is $O(PT_P+T_PT_A\log P)$ time.

\subsection{Matrix Search}
Wang et al. \cite{2011Representing} introduced the technique of binary search on sorted arrays, which we call matrix search in our work. 
The matrix search algorithm essentially similar to the linear-time selection algorithm \cite{leiserson1994introduction}.
Here, we provide an overview of the concept of matrix search.
First, we provide the necessary definitions for matrix search.
\begin{definition}Feasible Value, Feasibility Test \cite{jain2021algorithms}

Suppose we have a monotone decreasing function $f(x)$ and $f(\lambda^*) = 0$.
Given any $\lambda$, the decision problem
is to determine if $\lambda \ge \lambda^*$, i.e., if $f(\lambda) \ge 0$. This type of decision problem is called a feasibility test. If $\lambda \ge \lambda^*$, we say that $\lambda$ is a feasible value.
\end{definition}

\begin{definition}Matrix Search \cite{jain2021algorithms}

Given a set of $M$ sorted arrays $\{A_1,A_2,\cdots,A_M\}$, such that each array's size is at most $N$ and each array element can be evaluated in O(1) time, we must determine the smallest feasible value in these arrays.
\end{definition}

Next, we introduce the concept of matrix search. We choose a constant number of elements as “representative elements” from each array $A_i$($i \in [M]$). 
Then, we compute the (weighted) median of these $O(M)$ representative elements, denoted by $m_A$. We call the feasibility test to determine if $m_A \ge \lambda^*$  ($m_A$ is a feasible value), after which half of the representative elements can be removed. 

Then, we can carefully choose the representative elements such that a constant fraction of the elements in all $M$ arrays can be removed. 
We apply the above procedure recursively to the remaining elements. 
After $O(\log(N + M))$ iterations, the smallest feasible value can be found. 
In each iteration, we must compute the $O(M)$ representative elements and their (weighted) median and execute the feasibility test once, which requires $O(\log(N + M))$ feasibility tests and $O(M \log N)$ time, excluding the feasibility tests. 

\subsection{Persistent Segment Tree}
First, we briefly review segment tree (see \cite{bentley1980optimal}). 
 A segment tree is a data structure that stores information in an array as a tree. 
 This structure allows efficient answering of range queries over an array and yet continues to allows quick modification of the array. It requires $O(N)$ time and space complexity for building a segment tree, where $N$ is the length of the array. It supports finding the minimum or sum of any range of consecutive array elements (called a query) in $O(\log N)$ time. It also allows us to modify the array online by adding a value to an array element or modifying the values of a range (e.g., assigning a value to all elements, or adding a value to all elements in a range). 
 
 We introduce an example of the problem that a segment tree can solve. 
 We are given an array of $N$ values $a[0],a[1],\cdots,a[N-1]$. 
  Without loss of generality, we can assume that $N=2^n$. The following two operations should be supported by a segment tree in $O(\log N)$ time:
  \begin{enumerate}
      \item 
      \textbf{SUM} : for given $i, j$, compute $\sum\limits_{t=i}^j a[t]$.
      \item 
      \textbf{UPDATE} : for given $x, v$, update $a[x] \leftarrow a[x] + v$.
  \end{enumerate}
   We can build a segment tree using recursion. Every time we store the sum of the current range of the array in the corresponding node (also called the value of the node), we divide the current range of the array into two halves (if the length of the range is greater than one). We perform this recursively on both halves until the length of the current range is one (see Figure \ref{fig:tree-1} for an example). The node corresponding to the entire range of array $A$ is called the root of the segment tree.
   
   For the \textbf{SUM} operation, we can obtain the result by traversing the root of the segment tree. There are three situations for the node that we search.

  \begin{enumerate}
  \item
  If the range of the current node does not intersect the given range, then do nothing.
  \item
  If the range of the current node partially overlaps the given range, then traverse its children.
  \item
  If the range of the current node is within the given range, its value is added to the result.
  \end{enumerate}
  
  The \textbf{UPDATE} operation can also be performed by traversing the roots of the segment tree. We add $v$ to the value of all nodes such that given index $x$ is in their range (see Figure \ref{fig:tree-2} for an example).
  We provide the pseudo code for building, \textbf{SUM} and \textbf{UPDATE} of the segment tree in the following.
  
  \begin{algorithm}
	\caption{build$(root, l = 0, r = N-1)$} 
	\begin{algorithmic}[1]
	    \State create current node with range $[l, r]$
	    \If{$r - l$ equals $1$} 
	    \State \textbf{return}
	    \EndIf
	    \State $mid = \lfloor (l+r)/2 \rfloor$
	    \State build(left child of the current node, l, mid)
	    \State build(right child of the current node, mid + 1, r)
	\end{algorithmic} 
\end{algorithm}

\begin{algorithm}
	\caption{sum$(root, i, j)$} 
	\begin{algorithmic}[1]
	\If{the range of the current node does not intersect the given range $[i, j]$} 
	\State \textbf{return} 0
	\ElsIf{the range of the current node partially overlaps $[i, j]$}
	\State \textbf{return} sum(left child of the current node, $i$, $j$) + sum(right child of the current node, $i$, $j$)
	\ElsIf{the range of the current node is within $[i, j]$}
	\State \textbf{return|} the value of the current node
	\EndIf
	\end{algorithmic} 
\end{algorithm}

\begin{algorithm}
	\caption{update$(root, x, v)$} 
	\begin{algorithmic}[1]
	    \If{the range of the current node is not in the given range} 
	    \State \textbf{return}
	    \Else 
	    \State add $v$ to the value of the current node 
	    \State update(left child of the current node, $x$, $v$)  \State update(right child of the current node, $x$, $v$)
	    \EndIf
	\end{algorithmic} 
\end{algorithm}
  
 \begin{figure}[h]
  \includegraphics[width=1\textwidth]{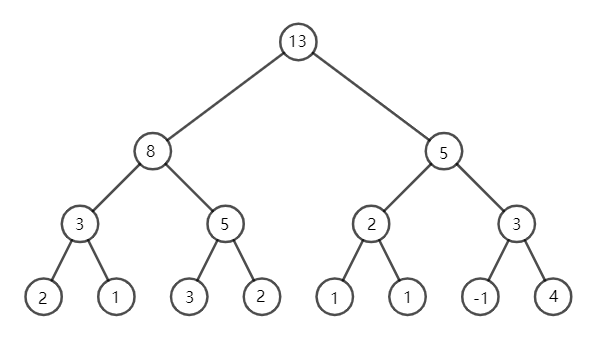}
  \caption{Instance of segment tree when $N=8$ ($A[]=\{2,1,3,2,1,1,-1,4\}$). The number in each node is the sum of its corresponding range in array $A[]$. The top node is called the root. }
  \label{fig:tree-1}
\end{figure}
 
 \begin{figure}[h]
  \includegraphics[width=1\textwidth]{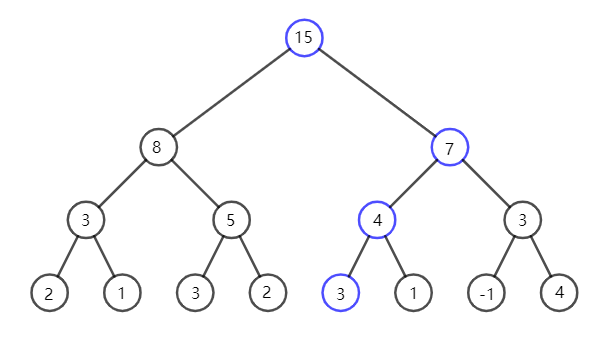}
  \caption{Example of \textbf{UPDATE} operation ($A[4] \leftarrow A[4] + 2$). The number in each node is the sum of its corresponding range in array $A[]$. The blue nodes are the nodes that are actually modified in this \textbf{UPDATE} operation.}
  \label{fig:tree-2}
\end{figure}
 
 Next, we introduce a persistent segment tree. 
 A persistent data structure is a data structure that preserves its previous version when it is modified. A data structure is partially persistent if all versions can be accessed, yet only the newest version can be modified. Persistent data structures can be used in version control applications such as Git, which enable multiple users to create new branches from the current version, make changes without modifying the older versions, and backtrack to an older version. For a detailed introduction of a persistent data structure, see \cite{kaplan2018persistent}. 
 
 A persistent segment tree is used to implement persistency in a segment tree. A persistent segment tree can preserve its past states while supporting updates. 
 
 We now provide an overview of the basic idea of a persistent segment tree.
 To preserve the previous state after each update operation, a new version of the segment tree can be built after each update operation. Suppose we have $Q$ updates in total, we would have $Q + 1$ versions of the segment tree.
Simply, we could store all the previous versions of the segment tree.
However, the building of a segment tree requires $O(N)$ time and space complexity. Thus, for $Q$ update operations it would require $O(QN)$ time and space complexity to preserve the previous state. Next, we introduce a more efficient approach to accomplish this task such that each update operation can be completed in $O(\log N)$ time and space complexity. 
The basic idea is that we only create those nodes that are actually modified in this operation in the new version of the segment tree, and share the remainder of the unchanged nodes from the previous version. This is acceptable because for each update operation, the number of nodes that are actually modified is $O(\log N)$ (see Figure \ref{fig:st}).
In conclusion, we can apply persistent segment tree to ensure that it requires $O(\log n)$ time for each query or modification.
\begin{center}
 \begin{figure}[t]
  \includegraphics[width=1\textwidth]{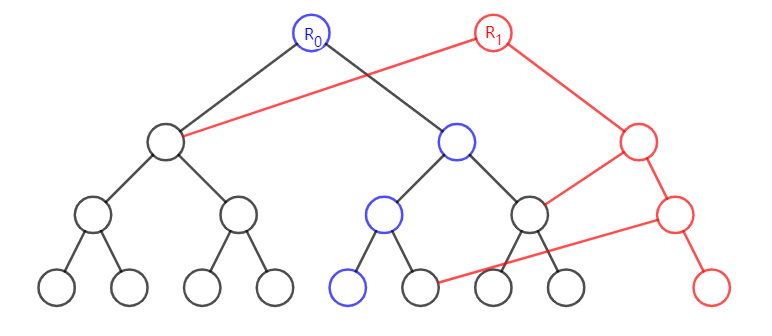}
  \caption{Example of persistent segment tree. $R_0$ is the root of the segment tree before the update operation and $R_1$ is the new root of the segment tree after the update operation.  The red nodes are the nodes that are created for an update operation. The blue nodes are the nodes that are modified in the update operation. }
  \label{fig:st}
\end{figure}
\end{center} 

\subsection{$k$-link Shortest Path}
Aggarwal \cite{aggarwal1993finding} introduced an efficient algorithm for the $k$-link shortest path problem with a convex or concave Monge property and included several applications such as data optimization and data compression. We only introduce the $k$-link shortest path problem with the convex Monge property.
First, we provide the definition of the $k$-link shortest path problem and convex Monge property.
\begin{definition}$k$-link shortest path problem \cite{aggarwal1993finding}.

Given $G = (V, E)$ as an edge weighted, complete, directed acyclic graph (DAG) with the vertex set $V = \{1, 2, \cdots ,n\}$. The weight of edge $(i,j)$ is $w(i,j)$($1 \le i <j \le n$).
For a path that contains exactly $k$ links (i.e., edges), we call the path a $k$-link path.
For any two vertices $i,j$, the $k$-link shortest path from $i$ to $j$ is the path from $i$ to $j$ that contains exactly $k$ links and has the minimum total weight between all such paths. Our goal is to determine the $k$-link shortest path from $1$ to $n$.
\end{definition}

\begin{definition}Convex Monge property \cite{aggarwal1993finding}.

For a weighted DAG $G$, if $w(i,j) + w(i + 1, j + 1) \ge w(i, j + 1) + w(i + 1, j)$ holds for all $1<i+1<j< n$, the DAG $G$ satisfies the convex Monge property.
\end{definition}

If the weights of the edges in a DAG $G$ satisfy the convex Monge property, the minimum $k$-link shortest path problem can be solved in $O(n \alpha(n) \log ^ 3 n)$ time \cite{aggarwal1993finding}.

Next, we provide an overview of Aggarwal's algorithm \cite{aggarwal1993finding}.
Let $G=(V,E)$ be the given weighted, complete DAG with the vertex set $V=\{1,2,\cdots,n\}$. We define DAG $G(\tau)$ as the DAG with the same sets of edges and vertices as G, while adding $\tau$ to all edge weights in $E$ (the weight for edge $(i,j)$ is $w(i,j)+\tau$ in $G(\tau)$). We can prove that if the minimum weight path from $1$ to $n$ in $G(\tau)$ has $k$ links, then this path is
the minimum weight $k$-1ink path from $1$ to $n$ in $G$ and the number of links in the minimum weight path from $1$ to $n$ is nonincreasing as $\tau$ increases \cite{aggarwal1993finding}. Therefore, we can solve this problem using a binary search on $\tau$. We improve it by using the parametric search paradigm given by Megiddo \cite{megiddo1983applying} (introduced in Section~\ref{psearch}).
Grossberg \cite{grossberg1982theory} provided a parallel algorithm that can compute the minimum weight path in $G(\tau)$ in
$O(\log^2 n)$ time using $O(n)$ processors. A decision algorithm that can compute the minimum weight path in $G(\tau)$ using $O(n\alpha(n))$ time can be found in Klawe and Kleitman's work\cite{doi:10.1137/0403009}.
Now, using the parametric search paradigm \cite{megiddo1983applying} that uses Grossberg's \cite{grossberg1982theory} algorithm as the test algorithm and Klawe and Kleitman's algorithm as the second algorithm, we can solve the $k$-link shortest path problem in $O(n \alpha(n) \log ^ 3 n)$ time.

\section{Constrained Obnoxious Facility Location Problems}
\subsection{\SCOFL}


We first consider the constrained obnoxious facility location problem
on a line segment with squares (\SCOFL).
Recall that we must pack $k$ nonoverlapping axis-aligned squares of the same size centered on the given segment $\overline{pq}$ such that no demand point of $P$ lies inside any of these squares. The size of a square is defined as half of its side length. We would like to maximize the size of the squares; we denote the maximum size as $\lambda^*$. 
Without loss of generality, we can assume that $\overline{pq}$ lies on the $x$-axis.

\subsubsection{Decision Version: \DSCOFL}
First, we solve the decision version of the \SCOFL\ problem (\DSCOFL), defined as follows. Given $\lambda$, \DSCOFL\ asks if we can place $k$ nonoverlapping squares of size $\lambda$ such that no demand point of $P$ lies inside any of these squares.
In fact, we present an algorithm $A$ to compute the maximum number of squares of size $\lambda$ that can be placed. We denote this number by $A(\lambda)$. Clearly, if we can compute $A(\lambda)$ efficiently, we can solve the decision problem \DSCOFL\ using the same time.

We define the $L_\infty$ distance from a point $p$ in $P$ to $\overline{pq}$ as $d(p)$ and the coordinate of each point $p_i$ in $P$ as $(x_i, y_i)$. Clearly, a point $p\in P$ with $d(p) \ge \lambda / 2$ can be removed from $P$ because no square of size $\lambda$ centered on $\overline{pq}$ can contain $p$. After removing those points, for each remaining point $p_i\in P$, we construct an interval $[x_i-\lambda / 2, x_i+\lambda / 2]$ (called a {\em forbidden interval}). 
Note that we can not place a square centered in a forbidden interval (otherwise this square would contain $p_i$). Removing all forbidden intervals from $\overline{pq}$,
we obtain a set $\Finterval$ of feasible intervals (see Figure~\ref{fig:sq-1}).

\begin{center}
 \begin{figure}[t]
  \includegraphics[width=1\textwidth]{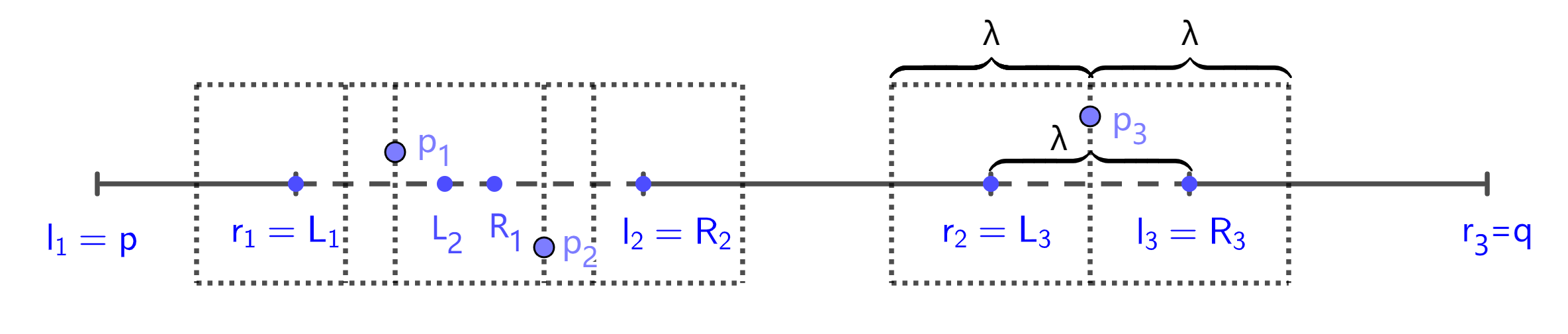}
  \caption{Obnoxious facilities at $p_1$, $p_2$ and $p_3$.$(L_1,R_1),(L_2,R_2),(L_3,R_3)$ are forbidden intervals that correspond to $p_1$, $p_2$ and $p_3$. $[l_1,r_1],[l_2,r_2],[l_3,r_3]$ are feasible intervals.}
  \label{fig:sq-1}
\end{figure}
\end{center}

For the $i$-th interval in $\Finterval$, denoted by $[l_i, r_i]$, the number of squares of size $\lambda$ that can be placed in it is 
$$
\text{Count}(i)=\lfloor (r_i - l_i) / \lambda  \rfloor + 1.
$$
Suppose $\Finterval$ contains $m$ intervals.
We can assume that the total number of squares that can be placed on $\overline{pq}$ is 
$\sum\limits_{i=1}^m \text{Count}(i)$ because the different feasible intervals do not interfere with each other (because each forbidden interval is of length $\lambda$). 
\footnote{
    Note that this is not true if we pack disks instead of squares. 
}
Thus, the time complexity of algorithm $A$ is $O(n)$ if we have the order of points in $P$; otherwise it is $O(n \log n)$.

\begin{theorem}
We can solve the \DSCOFL\ problem in exactly $O(n \log n)$ time.
\label{thm-21}
\end{theorem}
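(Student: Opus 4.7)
The plan is to reduce the decision problem to a simple counting problem over disjoint intervals on $\overline{pq}$, whose solution is obtainable in $O(n\log n)$ time by sorting.

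First, I would filter the input: any demand point $p_i = (x_i, y_i)$ with $L_\infty$ distance to $\overline{pq}$ at least $\lambda/2$ can be safely discarded, since no size-$\lambda$ axis-aligned square centered on the $x$-axis can contain it. For each remaining point, I would build the forbidden interval $F_i = [x_i - \lambda/2,\, x_i + \lambda/2]$, which is exactly the set of $x$-coordinates at which placing a square center would cause that square to contain $p_i$. Note that every $F_i$ has length exactly $\lambda$, which is the crucial structural fact behind the later independence argument.

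Next, I would sort the forbidden intervals by left endpoint (the $O(n\log n)$ bottleneck) and perform a single left-to-right sweep that merges overlapping forbidden intervals and emits the set $\Finterval = \{[l_1,r_1],\ldots,[l_m,r_m]\}$ of maximal feasible sub-intervals of $\overline{pq}$. Each square's center must lie in some feasible interval, and non-overlap of the squares forces consecutive centers inside a single feasible interval $[l_j,r_j]$ to be at distance at least $\lambda$; hence at most $\lfloor (r_j-l_j)/\lambda\rfloor + 1$ centers fit, a bound realized by placing centers at $l_j, l_j+\lambda, l_j+2\lambda,\ldots$ up to $r_j$. Summing these counts and comparing to $k$ would yield the decision.

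The correctness step I expect to be the main subtlety — rather than a technical obstacle — is showing that placements in distinct feasible intervals cannot interfere. I would argue that if $[l_j,r_j]$ and $[l_{j+1},r_{j+1}]$ are consecutive feasible sub-intervals, then $l_{j+1}-r_j \ge \lambda$, because the gap between them is a union of forbidden intervals and each forbidden interval has length exactly $\lambda$; therefore a square centered at $r_j$ and one centered at $l_{j+1}$ share at most a single boundary edge and have disjoint interiors. This is the place where squares and disks behave differently: for disks the forbidden chord length depends on the vertical offset $y_i$ and is generally shorter than $\lambda$, which is why the straightforward interval-sum formula fails for \DCOFL\ and the footnote is necessary. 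Given this observation, the maximum placeable count is $A(\lambda) = \sum_{j=1}^m \bigl(\lfloor (r_j-l_j)/\lambda\rfloor + 1\bigr)$, the algorithm returns ``yes'' iff $A(\lambda)\ge k$, and the total running time is dominated by the initial sort, giving $O(n\log n)$.
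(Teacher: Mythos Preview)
Your proposal is correct and follows essentially the same approach as the paper: filter out far points, build length-$\lambda$ forbidden intervals, extract the feasible intervals, sum $\lfloor (r_j-l_j)/\lambda\rfloor+1$ over them, and observe that the length-$\lambda$ forbidden intervals guarantee a gap of at least $\lambda$ between consecutive feasible intervals so the per-interval counts are independent. You are slightly more explicit than the paper in justifying the independence step and in noting why the analogous argument fails for disks, but the underlying algorithm and analysis are identical.
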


\subsubsection{Maximizing the Size of the Squares} \label{matrix}

In this subsection, we leverage the algorithm for the decision problem \DSCOFL\
to efficiently solve the optimization problem \DSCOFL.
We start with an easy observation regarding the optimal size $\lambda^*$.

\begin{observation}
\label{obs-1}
The optimal size $\lambda^*$ is divisible by the length of some feasible interval in $\Finterval$.
\end{observation}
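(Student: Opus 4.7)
The plan is to argue by contradiction, exploiting the fact that $A(\lambda)$ is integer-valued and non-increasing in $\lambda$, so the optimal $\lambda^*$ must sit at a jump of $A$. I will identify where those jumps can occur, and show that each one corresponds to some feasible interval length being an integer multiple of $\lambda$. (I read the statement of the observation in this direction: the length of some feasible interval in $\Finterval$ is an integer multiple of $\lambda^*$, equivalently, $\lambda^*$ divides that length.)

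First I would make the $\lambda$-dependence of $\Finterval$ explicit. After discarding points with $d(p)\ge\lambda/2$, each endpoint of a feasible interval is either an endpoint of $\overline{pq}$ or of the form $x_t\pm\lambda/2$, so $L_i(\lambda):=r_i(\lambda)-l_i(\lambda)$ is piecewise linear in $\lambda$ with slope in $\{0,-1/2,-1\}$. Moreover, on a sufficiently small right-neighbourhood $[\lambda^*,\lambda^*+\delta)$ the combinatorial structure of $\Finterval(\lambda)$ (which demand points survive the filter, which pair of adjacent forbidden intervals defines each feasible interval) is locally constant, so the functions $L_i(\lambda)$ are honest affine functions on that neighbourhood.

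Next I would look at the per-interval count $\text{Count}(i;\lambda)=\lfloor L_i(\lambda)/\lambda\rfloor+1$. Because $L_i(\lambda)/\lambda$ is continuous and strictly decreasing in $\lambda$ (whenever $L_i(\lambda)>0$), its floor can only change at values of $\lambda$ where $L_i(\lambda)/\lambda$ hits an integer, i.e., where $L_i(\lambda)$ is an integer multiple of $\lambda$. Now suppose for contradiction that at $\lambda^*$ no feasible interval length is an integer multiple of $\lambda^*$. Then each $\text{Count}(i;\cdot)$ is constant on $[\lambda^*,\lambda^*+\delta)$, so $A(\lambda^*+\delta/2)=A(\lambda^*)\ge k$, meaning $\lambda^*+\delta/2$ is also feasible and contradicting the maximality of $\lambda^*$.

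The main (minor) obstacle is the degenerate case where two adjacent forbidden intervals merge exactly at $\lambda^*$, causing a feasible interval to disappear or a new one to be born. I would handle this by viewing the vanishing interval as having $L_i(\lambda^*)=0=0\cdot\lambda^*$, which is trivially an integer multiple of $\lambda^*$ (with $j=0$), so the claim still holds at such $\lambda^*$. All other verifications (monotonicity of $A$, right-continuity of the floor, and the fact that $L_i(\lambda)/\lambda$ is strictly decreasing as long as $L_i(\lambda)>0$) follow immediately from the affine-plus-floor structure of $\text{Count}(i;\lambda)$ set up above.
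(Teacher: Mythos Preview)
The paper states this observation without any proof whatsoever; it is asserted as self-evident and the text moves immediately to the binary search on $y$-coordinates. Your argument is correct and supplies exactly the justification the paper omits: you pin down where the integer-valued, non-increasing function $A(\lambda)=\sum_i(\lfloor L_i(\lambda)/\lambda\rfloor+1)$ can jump, and show that any jump forces some $L_i(\lambda)/\lambda$ to be an integer. You also correctly read the (awkwardly worded) statement in the only sensible direction, namely that $\lambda^*$ divides some feasible-interval length.

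One small comment: the paper sidesteps one of the degenerate cases you worry about by first binary-searching over the sorted $|y_i|$ values to localise $\lambda^*$ to an interval $[y_t,y_{t+1}]$ on which the set of surviving demand points---and hence the combinatorial structure of $\Finterval(\lambda)$---is fixed. Within that slab, your affine description of each $L_i(\lambda)$ holds on the whole interval, not just a one-sided neighbourhood, so the ``new point appears'' case never arises. The remaining degeneracy (two forbidden intervals merging, i.e.\ $L_i(\lambda^*)=0$) you already handle correctly by noting $0=0\cdot\lambda^*$, and indeed this matches the paper's candidate list $\lambda_{it}$ with $t=1$.
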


Next, we sort all the (unsigned) $y$-coordinates of the points in $P$ as $Y = \{y_{p_1}, y_{p_2}, \cdots, y_{p_n}\}$ in increasing order. 
Using the decision algorithm, we can use binary search to determine the range of $\lambda^*$. 
In particular, we can find interval $[y_t, y_{t+1}]$ such that $\lambda^* \in [y_t, y_{t+1}]$.

Now, we know that $\lambda^* \in [y_t, y_{t+1}]$. Therefore, we can remove the point $p_i$ from $P$ such that $y_i > y_{t+1}$. We only have to consider the $x$-coordinate of the remaining points in $P$.
We have at most $n + 1$  feasible intervals in $\Finterval$, where we can place the center of the squares. Note that the number of feasible intervals could increase when $\lambda^*$ is reduced from $y_{t+1}$ to $y_t$ because certain overlapping forbidden intervals could become nonoverlapping when $\lambda^*$ decreases; other new feasible intervals could also appear during this processing. Therefore, we only consider the set of feasible intervals, denoted by $\Finterval_t$ when $\lambda^*=y_t$.

With Observation~\ref{obs-1}, we know that the length of at least one of the feasible intervals in $\Finterval_t$ can divide $\lambda^*$ in an optimal solution.
Therefore, we have at most $O(nk)$ candidates for $\lambda^*$ (at most $n+1$ feasible intervals
and placing at most $k$ centers in a feasible interval). 
In particular, for a feasible interval $(l_i, r_i)$, suppose we want to place exactly $t$ squares. We can calculate the maximum size of the square, denoted as $\lambda_{it}$, in $O(1)$ time. In particular, we can obtain $\lambda_{it}$ as the root of equation
$
\lambda_{it} = (r_i-l_i)/(t - 1)
$ (note that $l_i$ and $r_i$ are actually a function of $\lambda$).
These $\{\lambda_{it}\}$ are possible candidates for the optimal size $\lambda^*$.

We must apply the matrix search technique developed in \cite{2011Representing,chen2013algorithms}. 
This technique is useful for solving an optimization problem efficiently, 
if we have an efficient procedure to solve a certain decision problem 
(see \cite{chen2013algorithms,cao2015balanced} for different applications of this technique
to computational geometry problems).
We briefly recall their results.
Suppose there is a set of $M$ sorted arrays, such that the size of each array is at most $N$. We do not assume the set of arrays is stored explicitly. However, the value of each entry of an array can be evaluated in $O(1)$ time. $D(\lambda)$ is a decision procedure (called feasibility test) that takes a real value $\lambda$ as input and outputs either ``feasible" or ``infeasible". If $D(\lambda)$ returns ``feasible",
we also say $\lambda$ is a feasible value. An important property that $D$ satisfies is the following: if $\lambda$ is
a feasible value, any $\lambda'$ greater than $\lambda$ is also feasible.
Our goal here is to identify the smallest feasible value from these arrays
efficiently (in terms of the number of feasibility tests and execution time).
Formally, we have the following lemma proved in \cite{2011Representing}.

\begin{lemma}
\cite{2011Representing} Suppose we have a set of $M$ sorted arrays, such that each array's size is at most $N$, and each array element can be evaluated in O(1) time. Then, the smallest feasible value in these arrays can be computed by $O(\log (N + M))$ feasibility tests and the total execution time of the algorithm excluding
the feasibility test is $O(M \log N)$.
\label{lem-20}
\end{lemma}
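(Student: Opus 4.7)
My plan is to follow the matrix selection paradigm pioneered by Frederickson and Johnson and refined in \cite{2011Representing}. For each sorted array $A_i$ I maintain a contiguous range of ``alive'' indices $[l_i, r_i]$, initialized to $[1, |A_i|]$; the invariant is that the smallest feasible value across all arrays is guaranteed to lie among the currently alive entries. The algorithm proceeds in rounds, and each round spends one feasibility test to prune a constant fraction of the alive entries.

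In each round I first extract, from each array $A_i$, the value at a fixed relative position inside $[l_i, r_i]$ (say the midpoint), which costs $O(1)$ per array since entries are $O(1)$-evaluable, hence $O(M)$ in total. I then compute the weighted median $m$ of these $M$ representative values, where the weight of the representative drawn from $A_i$ is its alive size $r_i - l_i + 1$; this runs in $O(M)$ time by the classical linear-time weighted-median algorithm. A single call to $D$ on $m$ tells us whether $\lambda^* \le m$ or $\lambda^* > m$. In the feasible case, every alive entry strictly greater than $m$ can be discarded, and by the weighted-median property these entries constitute at least a constant fraction of the total alive volume $V := \sum_i (r_i - l_i + 1)$, so $V$ shrinks by a constant factor. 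The infeasible case is symmetric. Contiguity of the alive ranges is preserved because each array is sorted, so the discarded portion is always a prefix or a suffix of $[l_i, r_i]$.

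Starting from $V \le MN$ and shrinking $V$ by a constant factor per round gives termination in $O(\log(NM)) = O(\log(N+M))$ rounds, which yields the stated feasibility-test bound. The main obstacle, and what requires care from \cite{2011Representing,chen2013algorithms}, is driving the non-test work down to $O(M \log N)$ rather than the naive $O(M \log(NM))$; this is handled by a two-phase schedule in which the first phase halves each alive range in parallel across arrays (charging $O(1)$ per array per halving, and at most $O(\log N)$ halvings per array, for $O(M \log N)$ total) and the second phase reduces the surviving pool of arrays from $M$ down to one via a geometric schedule that spends only $O(M)$ additional non-test work. Once all alive ranges have collapsed to singletons, the smallest feasible value among the at most $O(M)$ survivors can be pinpointed with an extra $O(\log M)$ feasibility tests, which is absorbed into the $O(\log(N+M))$ budget.
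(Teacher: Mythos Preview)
The paper does not give its own proof of this lemma: it is quoted verbatim from \cite{2011Representing} and used as a black box, with only the informal overview in Section~2.2 (weighted medians of per-array representatives, one feasibility test per round, a constant fraction of alive entries pruned each round). Your sketch follows exactly that paradigm, so there is nothing to compare against in the present paper beyond noting that your outline matches the overview.

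A couple of small points on the sketch itself. First, your description of the ``two-phase schedule'' is slightly off: the first phase does not halve \emph{every} array in every round; only arrays whose representative falls on the discardable side of the weighted median get halved, and the weighted-median property is what guarantees these arrays carry at least half the total alive weight (hence a $\ge 1/4$ fraction of all alive entries is removed). The $O(M\log N)$ non-test bound then comes from the fact that after $O(\log N)$ rounds the total alive volume has dropped below $M$, and from that point on the number of nonempty arrays itself shrinks geometrically, so the remaining rounds contribute only $O(M)$ additional work. Second, in the feasible case you should record $m$ as a candidate answer before discarding entries $\ge m$, since the smallest feasible array entry could equal $m$ and be discarded; your final ``pinpoint among survivors'' step implicitly needs this bookkeeping. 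With those clarifications the argument is sound.
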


We now demonstrate how to use the matrix search technique to solve our problem.
We have $n$ arrays of candidates for $\lambda^*$, each array of size $k$.
The value of the $t$-th entry of $i$-th array is $\lambda_{it}$.
Clearly, we can see $\lambda_{it} > \lambda_{i (t + 1)}$.
Hence, each array is sorted.
The feasibility test $D(\lambda)$ here is the decision problem to determine if the maximum number of squares of size $\lambda$ that can be placed is not greater than $k$.
The feasibility test can be implemented by the algorithm for \DSCOFL\ 
in $O(n\log n)$ time (Theorem~\ref{thm-21}).
It is easy to conclude that for any $\lambda>\lambda^*$, the answer to the decision problem is ``Yes",
and for any $\lambda<\lambda^*$, the answer is ``No".
Hence, the properties required by the matrix search procedure are satisfied and we can apply 
Lemma~\ref{lem-20}.

\begin{theorem}
The \SCOFL\ problem can be solved in $O(n \log n \log(n + k))$ time. 
\label{thm-20}
\end{theorem}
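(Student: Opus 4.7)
The plan is to combine the $O(n\log n)$ decision procedure of Theorem~\ref{thm-21} with the matrix-search framework of Lemma~\ref{lem-20}, applied to the $O(nk)$ candidate set identified by Observation~\ref{obs-1}. First I would sort the unsigned $y$-coordinates $\{y_{p_1},\ldots,y_{p_n}\}$ and run an ordinary binary search on this list using the decision procedure as an oracle, to find an index $t$ with $\lambda^*\in[y_t,y_{t+1}]$. This costs $O(\log n)$ feasibility tests of $O(n\log n)$ each, i.e., $O(n\log^2 n)$ in total. The purpose of this localization is structural: every $p_i$ with $y_i>y_{t+1}$ cannot lie inside any square of size $\lambda\le\lambda^*$, so such points contribute no forbidden interval and may be discarded; for the points that remain, the combinatorial identity of each feasible interval in $\Finterval_t$ is stable throughout $[y_t,y_{t+1}]$, and its endpoints $l_i(\lambda),r_i(\lambda)$ are linear functions of $\lambda$ on this window.

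Next I would set up the candidate matrix. By Observation~\ref{obs-1}, at the optimum some feasible interval $(l_i,r_i)\in\Finterval_t$ must hold exactly $t$ squares, so $(t-1)\lambda^*=r_i(\lambda^*)-l_i(\lambda^*)$ for some $1\le t\le k$. Solving this linear equation in $\lambda$ for each pair $(i,t)$ yields a candidate $\lambda_{it}$ that is evaluable in $O(1)$ time. Arranging these into $M\le n+1$ arrays $A_i=(\lambda_{i1},\ldots,\lambda_{ik})$, each of size $N\le k$, every array is sorted because once the combinatorial structure of $\Finterval_t$ is fixed, $\lambda_{it}$ is strictly monotone in $t$. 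The feasibility test ``can at most $k$ squares of size $\lambda$ be packed?'' is monotone in $\lambda$ (larger $\lambda$ yields fewer placeable squares) and is supplied by the algorithm of Theorem~\ref{thm-21} at cost $O(n\log n)$ per call.

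Applying Lemma~\ref{lem-20} with $M=O(n)$ and $N=k$ yields $O(\log(n+k))$ feasibility tests plus $O(n\log k)$ auxiliary work, for a total of
$$
O(n\log^2 n)+O\bigl(n\log n\cdot\log(n+k)\bigr)+O(n\log k)=O\bigl(n\log n\log(n+k)\bigr),
$$
matching the claim. I expect the main delicate point to be verifying that, within the window $[y_t,y_{t+1}]$, the counting identity used in Theorem~\ref{thm-21} remains valid, so that $\sum_i\bigl(\lfloor(r_i-l_i)/\lambda\rfloor+1\bigr)$ equals the number of placeable squares exactly, and that the candidates $\lambda_{it}$ really are roots of linear equations in $\lambda$ rather than genuine fixed-point values requiring iterative solution. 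Both issues reduce to the same underlying fact: each forbidden interval has length exactly $\lambda$, and the qualifying set of $y$-coordinates is frozen once we have localized $\lambda^*$ into $[y_t,y_{t+1}]$, so the non-interference of adjacent feasible intervals persists throughout the matrix search.
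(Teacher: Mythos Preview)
Your proposal is correct and follows essentially the same approach as the paper: localize $\lambda^*$ into a $y$-coordinate window by binary search using the \DSCOFL\ oracle, freeze the combinatorial structure of $\Finterval_t$, and then run the matrix-search lemma on the $O(n)\times k$ array of candidates $\lambda_{it}$ with the $O(n\log n)$ decision procedure as the feasibility test. Your discussion of why the endpoints $l_i(\lambda),r_i(\lambda)$ are linear and why the non-interference of adjacent feasible intervals persists across the window is in fact more explicit than the paper's own treatment.
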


\subsection{\COFL}

Recall that in \COFL, our goal is to 
place $k$ facilities on $\overline{pq}$ to 
maximize the radius $\lambda$ defined as
$
\lambda=\min\{\max_{i\in [n]}\min_{j\in [k]} d(p_i, c_j), 
\alpha \cdot \max_{i\in [k-1]} d(c_i, c_{i+1}),
\}
$
where $\alpha>0$ is a given fixed coefficient.
In this section, we prove that
we can solve the \COFL\ problem in $O(n ^ 2 \log k)$ time. 
Similar to \SCOFL, we first solve the corresponding decision problem.
However, this decision problem is marginally more difficult than \SCOFL\ because, adjacent feasible intervals can interfere with each other.

\subsubsection{Decision Version: \DCOFL}

We first introduce the decision version \DCOFL\ defined as follows.
Given any value $\lambda>0$, \DCOFL\ asks if it is possible to place $k$ disks of radius $\lambda$ centered on $\overline{pq}$,
such that no points in $P$ lie inside any of these disks,
and the distance between any adjacent centers $d(c_i,c_{i+1})$ is at least $\lambda/\alpha$.
We use $\lambda^*$ to denote the optimal solution of the \COFL\ problem. 
We can conclude that \DCCOFL\ can determine if $\lambda > \lambda^*$.

Next, we present an algorithm $A$ to compute the maximum number of disks with radius $\lambda$ that can be placed. We denote this number by $A(\lambda)$,
and we can assume that if we can compute $A(\lambda)$ efficiently, we can solve the decision problem using the same time complexity.

Without loss of generality, we assume that segment $\overline{pq}$ is on the $x$-axis and $p$ is the origin $(0,0)$; we denote $q$ as $(x,0)$.

For each point $p_i$ in $P$, we can calculate the corresponding forbidden interval $f_i = [L_i(\lambda), R_i(\lambda)]$ such that a disk with radius $\lambda$ can cover point $p_i$ if and only if it is centered on $[L_i(\lambda), R_i(\lambda)]$. We denote the set containing all forbidden interval as $\mathcal{F}=\{f_i | i \in [n]\}$.

\begin{observation}
For a forbidden interval $s_i = [L_i(\lambda), R_i(\lambda)]$, $L_i(\lambda)$ is nonincreasing and $R_i(\lambda)$ is nondecreasing.
\label{obs-3.5}
\end{observation}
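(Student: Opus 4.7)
The plan is to derive closed-form expressions for $L_i(\lambda)$ and $R_i(\lambda)$ directly from the geometric definition of a forbidden interval, and then read off monotonicity from the formulas. Recall the setup: we placed $\overline{pq}$ on the $x$-axis and wrote $p_i=(x_i,y_i)$. A disk of radius $\lambda$ centered at $(c,0)$ covers $p_i$ precisely when $(c-x_i)^2+y_i^2\le\lambda^2$. So the first step is to solve this inequality for $c$.

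When $\lambda<|y_i|$ the set of feasible $c$ is empty and the forbidden interval is vacuous (one can either treat $L_i=+\infty$, $R_i=-\infty$, or simply remark that, as already observed in the \SCOFL\ discussion, such points can be removed from $P$ for this value of $\lambda$). Assuming $\lambda\ge|y_i|$, the inequality rearranges to
\[
x_i-\sqrt{\lambda^2-y_i^2}\;\le\;c\;\le\;x_i+\sqrt{\lambda^2-y_i^2},
\]
so $L_i(\lambda)=x_i-\sqrt{\lambda^2-y_i^2}$ and $R_i(\lambda)=x_i+\sqrt{\lambda^2-y_i^2}$. The second step is then immediate: the map $\lambda\mapsto\sqrt{\lambda^2-y_i^2}$ is nondecreasing on $[|y_i|,\infty)$ (its derivative $\lambda/\sqrt{\lambda^2-y_i^2}$ is nonnegative), hence $L_i(\lambda)$ is nonincreasing and $R_i(\lambda)$ is nondecreasing in $\lambda$.

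I do not expect any real obstacle here; the only care needed is in stating what happens at the regime boundary $\lambda=|y_i|$ and in noting that the monotonicity claim still makes sense after one clips $[L_i(\lambda),R_i(\lambda)]$ to the segment $\overline{pq}$, since clipping a growing interval with a fixed segment preserves monotonicity of the endpoints in the required direction. Everything else is a one-line consequence of the explicit formulas.
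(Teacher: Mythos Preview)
Your proposal is correct. The paper states this observation without proof, treating it as self-evident; your derivation via the explicit formulas $L_i(\lambda)=x_i-\sqrt{\lambda^2-y_i^2}$ and $R_i(\lambda)=x_i+\sqrt{\lambda^2-y_i^2}$ is exactly the intended justification.
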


Excluding the forbidden intervals from $\overline{pq}$, we obtain the set of 
{\em feasible intervals} $\Finterval$. Notice that $\Finterval$ is a set of several disjoint intervals. Now, the $\DCOFL$ problem is equivalent to the problem of asking if one can place $k$ centers on feasible intervals in $\Finterval$ where the distance between any two centers is no less than $\lambda/\alpha$.

Next, we present the algorithm for the decision version, which consists of two steps.
In the first step, we indicate how to determine the set of feasible intervals $\Finterval$
efficiently. 
In the second step, we compute the maximum number of centers that we can place on $\Finterval$.

\paragraph{First step: Compute feasible intervals.}
Given any $\lambda$, we can obtain the value of $2n$ endpoints of $n$ forbidden
intervals in $O(n)$ time. After sorting these forbidden interval endpoints in $O(n\log n)$ time, we can scan the $2n$ ordered forbidden
interval endpoints and easily determine $t$ ($t \le n$) disjoint feasible intervals (sorted),
in  $\Finterval$. We denote the $i$-th feasible interval in $\Finterval$ as $[l_i(\lambda), r_i(\lambda)]$.

\paragraph{Second Step: Place the centers greedily.}
For simplicity of notation, we assume $\alpha=1$.
In general, $\alpha>0$ can be addressed in a similar manner.
Our goal is to determine if $k$ centers can be placed on $\Finterval$ 
such that the distance between every pair of nodes is no less than $\lambda$. We can solve this problem using Algorithm \ref{alg-3}.

\begin{algorithm}
	\caption{count$(\Finterval, n, \lambda)$} 
	\label{alg-3}
	\begin{algorithmic}[1]
	\State $\textsf{Res} \gets \lfloor (r_1(\lambda) - l_1(\lambda)) / \lambda \rfloor + 1$
	\State $\textsf{Last} \gets l_1(\lambda)+ (\textsf{Res} - 1) \cdot \lambda$
		\For {$i=2$ to $n$}
		\State $\textsf{Temp} \gets \lfloor (r_i(\lambda) - max(l_i(\lambda), \textsf{Last} + \lambda)) / \lambda \rfloor$
		\State $\textsf{Res} \gets \textsf{Res} + \textsf{Temp} + 1$
		\If{$\textsf{Temp} \ge 0$}
		    \State $\textsf{Last} \gets max(l_i(\lambda), \textsf{Last} + \lambda) + \textsf{Temp} \cdot \lambda$
		\EndIf
		\EndFor
	\State \textbf{return} \textsf{Res}	
	\end{algorithmic} 
\end{algorithm}

\begin{lemma}
Given that any $\lambda$, $A(\lambda)$ can be determined in $O(t \log k)$ time after we have computed $\Finterval$ in the first step, if the floor function is allowed, then we can solve the decision problem in $O(t)$.
\label{lem-2}
\end{lemma}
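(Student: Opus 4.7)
The plan is to establish both the correctness and the time bound of Algorithm~\ref{alg-3}. For correctness, I will argue by a standard left-shift exchange that the greedy sequence produced by the algorithm is pointwise no later than any other valid placement, and therefore maximizes the total count. Let $c_1^{*} \le c_2^{*} \le \cdots \le c_N^{*}$ be any valid placement of $N$ centers on the feasible intervals of $\Finterval$ satisfying all separation constraints, and let $g_1 \le g_2 \le \cdots \le g_A$ be the sequence that the algorithm effectively outputs (reconstructed from the $\mathsf{Last}$ variable at each step). I will prove by induction on $j$ that $g_j \le c_j^{*}$ for every $j \le N$, which immediately gives $A \ge N$ and hence $A = A(\lambda)$. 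The inductive step rests on two facts: inside a single feasible interval $[l_i(\lambda), r_i(\lambda)]$, placing consecutive centers at spacing exactly $\lambda$ starting at the leftmost admissible point $\max(l_i(\lambda), \mathsf{Last}+\lambda)$ fits the maximum number of centers compatible with the separation from the previous center; and shifting a center leftward never violates a constraint with an already-placed center earlier in the sequence, so the greedy choice cannot foreclose any option that the optimum uses in later intervals.

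For the time bound, the loop runs exactly $t$ times and each iteration performs a constant number of arithmetic operations, comparisons, and one evaluation of a floor. If the floor function is available as an $O(1)$ primitive, the total running time is clearly $O(t)$.

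When the floor function is not available, I will replace each floor evaluation by a binary search. The key observation is that for the decision problem we may safely terminate the loop as soon as the running count $\mathsf{Res}$ exceeds $k$ and declare the instance feasible; under this early-termination rule, the value of each floor computed during the algorithm is at most $k$. Each floor $\lfloor (r_i(\lambda) - \max(l_i(\lambda), \mathsf{Last}+\lambda))/\lambda \rfloor$ can therefore be computed by binary searching for the largest integer $m \in \{0,1,\ldots,k\}$ with $m\lambda \le r_i(\lambda) - \max(l_i(\lambda), \mathsf{Last}+\lambda)$, using only additions and comparisons and $O(\log k)$ probes. Summing over the $t$ iterations yields the claimed $O(t \log k)$ bound.

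The main obstacle I expect is making the exchange argument fully airtight at the boundary between consecutive feasible intervals. Greedy may leave $\mathsf{Last}$ further to the right than the corresponding center of some optimal solution, which at first glance could push the usable portion of the next interval to the right and appear to lose placements. The resolution is that greedy is simultaneously maximizing the center count inside interval $i$, so any rightward drift of $\mathsf{Last}$ is offset by having already placed at least as many centers as the optimum up through that interval; a short case analysis splitting on whether $\mathsf{Last}+\lambda \le l_{i+1}(\lambda)$ or $\mathsf{Last}+\lambda > l_{i+1}(\lambda)$ then closes the induction.
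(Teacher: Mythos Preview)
Your proposal is correct and follows essentially the same approach as the paper: the same left-to-right greedy (Algorithm~\ref{alg-3}), the same per-interval count via $\lfloor (r_i-\max(l_i,\mathsf{Last}+\lambda))/\lambda\rfloor$, and the same replacement of each floor by an $O(\log k)$ binary search (the paper justifies the $\log k$ bound via the parenthetical ``$0\le \text{Count}(i)\le k$'', which is exactly your early-termination observation). Your exchange argument $g_j\le c_j^{*}$ is in fact more rigorous than the paper's own proof, which simply asserts greedy optimality; note also that once this invariant holds, greedy's $\mathsf{Last}$ is never to the right of the optimum's corresponding center, so the boundary ``obstacle'' you flag in your final paragraph does not actually arise and the extra case split is unnecessary.
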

\begin{proof}
We place the center on the left endpoint of the first feasible interval. Then, we place centers greedily, which means that we place the next center that is closest to the previous center under the condition that we place it in the place that is at least $\lambda$ distance from the previous center on a feasible interval. 
The situation here is marginally more complicated than the square case because different feasible intervals could interfere with each other. The pseudocode can be found in Algorithm~\ref{alg-3}.

For the $i$-th ($i \ge 2$) feasible interval $[l_i(\lambda), r_i(\lambda)]$ in $\Finterval$, suppose the $x$-coordinate of the previous feasible interval is 
$\textsf{Last}_{i-1}$. We define the floor function of a negative real number $x$ as the greatest integer less than $x$ ( e.g., $\lfloor -3.2\rfloor = -4$). Then, the number of centers that can be placed on the $i$-th feasible interval $(l_i,r_i)$, 
denoted by $\text{Count}(i)$, is computed as
$$
\text{Count}(i)=\lfloor (r_i - \max(l_i, \textsf{Last}_{i-1}+\lambda)) / \lambda \rfloor + 1.
$$
For the first feasible interval in $\Finterval$, 
$\text{Count}(1)=\lfloor (r_1 - l_1) / \lambda \rfloor + 1$.
Finally, we obtain $A(\lambda) = \sum\limits_{i=1}^t \text{Count}(i)$.

$A(\lambda)$ can be determined in $ O(t \log t)$ time if the floor function is allowed (which requires $O(1)$ time) after the first step, and otherwise in $ O(t \log t + t \log k)$ time ($0\le \text{Count}(i) \le k$).
\end{proof}

We summarize our result in the following theorem.
\begin{theorem}
We can solve the decision problem 
\DCOFL\ in $O(n \log n)$ time if the floor function is allowed (which requires $O(1)$ time); otherwise $O(n \log n + n \log k)$ time.
\end{theorem}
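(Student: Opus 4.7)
The plan is to combine the two-step structure already set up in the section. Step one computes the sorted set $\Finterval$ of feasible intervals, and step two invokes the greedy counting procedure of Algorithm~\ref{alg-3} on $\Finterval$ and compares the resulting $A(\lambda)$ against $k$. The theorem's time bound then follows by summing the costs of the two steps.

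For step one, I would evaluate the $2n$ forbidden-interval endpoints $L_i(\lambda), R_i(\lambda)$ from their closed-form definitions in $O(n)$ time and sort them in $O(n \log n)$ time. A single left-to-right sweep over the sorted endpoints, tracking the depth of currently open forbidden intervals, identifies the maximal depth-zero subsegments of $\overline{pq}$; these are precisely the $t \le n+1$ feasible intervals $[l_i(\lambda), r_i(\lambda)]$ of $\Finterval$, produced already in sorted order. The total cost of this step is $O(n \log n)$.

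For step two, I would apply Lemma~\ref{lem-2} directly to the computed $\Finterval$. With the floor function available in $O(1)$, each iteration of the for-loop in Algorithm~\ref{alg-3} runs in $O(1)$, so $A(\lambda)$ is obtained in $O(t) = O(n)$ time. Without the floor function, each expression of the form $\lfloor (r_i - \max(l_i, \textsf{Last}+\lambda))/\lambda \rfloor$ must be emulated by a binary search; since the relevant value of $\text{Count}(i)$ lies in $[0, k]$ (once the running total exceeds $k$ we may abort and declare success), the search is over an integer range of size $O(k)$ and costs $O(\log k)$ per interval, giving $O(t \log k) = O(n \log k)$ overall. A final test $A(\lambda) \ge k$ resolves \DCOFL.

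Adding the two steps yields $O(n \log n)$ when floors are free and $O(n \log n + n \log k)$ otherwise, matching the claimed bounds. I do not expect a serious obstacle: the only nontrivial ingredient is the correctness of the greedy strategy of Algorithm~\ref{alg-3} in the presence of interference between adjacent feasible intervals through the carried-over value $\textsf{Last}$, but this is exactly what Lemma~\ref{lem-2} establishes via a standard leftward exchange argument (moving any optimal placement's centers as far left as feasibility allows never decreases the count). Modulo that lemma, the theorem reduces to the additive bookkeeping above.
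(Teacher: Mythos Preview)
Your proposal is correct and follows the paper's approach exactly: the theorem is stated as an immediate consequence of combining the $O(n\log n)$ sorting-and-sweep step for computing $\Finterval$ with the $O(t)$ (resp.\ $O(t\log k)$) greedy count of Lemma~\ref{lem-2}, and the paper offers no further argument beyond this additive bookkeeping. Your explicit mention of emulating the floor by binary search over $[0,k]$ and of the leftward exchange argument underlying the greedy correctness matches what the paper leaves implicit.
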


\subsubsection{Maximizing the Radius}

In this subsection, we use the algorithm for the decision version \DCOFL\
to design an efficient algorithm for the optimization problem \COFL.
Without loss of generality, we assume $\alpha=1$.

\paragraph{Method}
We have $t$ sorted disjoint feasible intervals where we can place the centers.
Each endpoint of these $t$ disjoint intervals is a polynomial of the radius $\lambda$ of the disk. We denote the $i$-th feasible interval as $(l_i(\lambda), r_i(\lambda))$. We want to determine the maximum $\lambda^*$ such that we can place $k$ nodes on those intervals under the condition that the distance between every pair of nodes is not less than $\lambda^*$. 

Notice that if we place a node on the $i$-th interval, it could influence the number of nodes that we can place on the $(i+1)$-th interval when the distance between the right endpoints of the $i$-th interval andleft endpoint of the $(i+1)$-th interval is less than $\lambda^*$. Similar to Observation~\ref{obs-1}, we have the same observation regarding
$\lambda^*$.

\begin{observation}
For the optimal radius $\lambda^*$, we can find at least one pair of endpoints $l_i(\lambda)$ and $r_j(\lambda)$ in $\Finterval$,
such that the distance between $r_j(\lambda)$ and $l_i(\lambda)$ can be divided by $\lambda^*$.
\label{obs-2}
\end{observation}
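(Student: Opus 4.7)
The plan is to exploit the characterization of $\lambda^*$ as the largest value of $\lambda$ for which Algorithm~\ref{alg-3} returns at least $k$. Each per-interval count $\text{Count}(m)$ produced by the algorithm is a non-increasing step function of $\lambda$, so $A(\lambda)=\sum_{m}\text{Count}(m)$ is also non-increasing and piecewise constant. At $\lambda^*$, $A$ must drop strictly below $k$ just to the right of $\lambda^*$; otherwise we could increase $\lambda$ slightly without losing any centers, contradicting optimality. Hence some individual $\text{Count}(m)$ must jump downward precisely at $\lambda^*$.

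The first step is to identify such a critical interval. Recall that $\text{Count}(m) = \lfloor (r_m(\lambda) - \max(l_m(\lambda),\, \textsf{Last}_{m-1}+\lambda))/\lambda\rfloor + 1$, and by Observation~\ref{obs-3.5} together with the fact that $\textsf{Last}_{m-1}$ is non-decreasing in $\lambda$, the argument of this floor is non-increasing in $\lambda$. Its downward jumps therefore occur exactly where the argument equals an integer. Let $j$ be the smallest index whose $\text{Count}$ drops at $\lambda^*$; then
\[
r_j(\lambda^*) - \max(l_j(\lambda^*),\, \textsf{Last}_{j-1}+\lambda^*) \;=\; q\,\lambda^* \quad \text{for some integer } q \ge 0.
\]

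The second step is to unwind the chain of $\textsf{Last}$ values. If $l_j$ attains the maximum, setting $i=j$ gives $r_j - l_j = q\lambda^*$ and we are done. Otherwise $\textsf{Last}_{j-1}+\lambda^*$ attains the maximum, and by the update rule in Algorithm~\ref{alg-3} we have $\textsf{Last}_{j-1} = \max(l_{j'},\, \textsf{Last}_{j'-1}+\lambda^*) + (\text{Count}(j')-1)\lambda^*$ for the most recent active interval $j'<j$. Iterating this substitution, every step adds an integer multiple of $\lambda^*$, and the recursion terminates at some index $i$ where $l_i$ is binding---at the latest at $i=1$, since the first two lines of the algorithm anchor $\textsf{Last}_1$ directly to $l_1$. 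Telescoping yields
\[
r_j(\lambda^*) - l_i(\lambda^*) \;=\; \Bigl(\sum_{s=i}^{j}\text{Count}(s) - 1\Bigr)\lambda^*,
\]
which is an integer multiple of $\lambda^*$ and proves the observation with the pair $(l_i, r_j)$.

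The main obstacle will be keeping the simultaneous variations under control. Since $l_m(\lambda)$, $r_m(\lambda)$, and all the $\textsf{Last}_m$ values change continuously with $\lambda$, one must carefully verify that the piecewise-constant behavior of $A$ really does force some floor-function argument to land on an integer at $\lambda^*$, and that the backward chain of tight equalities propagates consistently without being broken when an intermediate interval becomes inactive. These issues are handled by choosing $j$ to be the smallest index whose $\text{Count}$ jumps at $\lambda^*$ and by tracing the $\textsf{Last}$ chain back only through intervals with $\text{Count}\ge 1$, so that every intermediate contribution is a clean multiple of $\lambda^*$ as required.
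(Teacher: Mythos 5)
Your argument is correct in substance, and it is worth noting that the paper supplies no proof of this observation at all: it is asserted only by analogy with Observation~\ref{obs-1}, which is itself stated without proof. So your proposal is not a variant of the paper's argument but a genuine filling of a gap. You characterize $\lambda^*$ as the largest $\lambda$ with $A(\lambda)\ge k$, locate the first per-interval count that drops at $\lambda^*$, deduce that the corresponding floor argument must equal an integer multiple of $\lambda^*$ there, and telescope the chain of $\textsf{Last}$ values back to an anchoring left endpoint. This is the standard ``tight chain of centers'' argument for max-min dispersion on intervals, and it does establish the observation in the form the paper later uses it (namely that $r_j(\lambda^*)-l_i(\lambda^*)=t\lambda^*$ for some pair of endpoints and some positive integer $t$).

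Two small corrections are needed. First, your opening claim that each individual $\text{Count}(m)$ is a non-increasing step function of $\lambda$ is false: when $\text{Count}(m')$ drops for some $m'<m$, the value $\textsf{Last}_{m'}$ jumps to the left, which can cause $\text{Count}(m)$ to \emph{increase}. Only the total $A(\lambda)$ is non-increasing (any placement valid for $\lambda$ remains valid for $\lambda'<\lambda$), and that is all you actually use, since a strict decrease of the sum forces a strict decrease of at least one summand. Second, the deduction that the floor argument lands on an integer at $\lambda^*$ requires $\textsf{Last}_{j-1}$ to be continuous at $\lambda^*$. This does hold for your choice of $j$, but for a reason you leave implicit: one needs the induction that the first index at which any $\text{Count}$ changes discontinuously must be a drop, because $\text{Count}(1)$ depends only on continuous, monotone inputs, and an increase of a later count can only be triggered by a discontinuity at an earlier index. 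With that induction spelled out, $\textsf{Last}_{j-1}$ is continuous at $\lambda^*$, the floor argument is a continuous function whose integer part falls immediately to the right of $\lambda^*$, hence equals an integer at $\lambda^*$, and your telescoping through the active intervals goes through cleanly.
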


Again, we apply the matrix search technique developed in \cite{2011Representing},
which we reviewed in Section~\ref{matrix}.

The reduction of our optimization problem to a matrix search problem is similar to that in Section~\ref{matrix}.
For each pair $l_i(\lambda),r_j(\lambda)\in S$, we can generate a sorted list as $\{\lambda_{1}, \lambda_{2}, \cdots, \lambda_{k}\}$, where $\lambda_t$ denotes the root of the equation $r_j(\lambda) - l_i(\lambda) = t \lambda $. 
We note that $r_j(\lambda) - l_i(\lambda)$ is non-increasing in $t$, based on Observation~\ref{obs-3.5}.
Therefore, $\lambda_t \ge \lambda_{t+1}$. 
Now, we have a set of $t^2$ sorted arrays where each array's size is $k$.
The optimal $\lambda^*$ is the value of an entry of an array.
We use our algorithm for \DCOFL\ as the feasibility test.
Finally, $\lambda^*$ can be found by applying Lemma \ref{lem-20}. We summarize our result in the following lemma.

\begin{lemma}
Given $t$ sorted disjoint intervals where each endpoint of these $t$ disjoint intervals is a polynomial of the diameter $\lambda$ of the disk. The problem of finding the maximum $\lambda$ such that we can place $k$ nodes under the condition that the distance between each pair of nodes is no less than $\lambda$ can be solved in $O(t^2 \log k + t \log k \log  (t^2 + k))$ time.
\end{lemma}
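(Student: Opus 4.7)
The plan is to reduce the optimization problem to matrix search (Lemma~\ref{lem-20}) using the decision algorithm for \DCOFL\ as the feasibility test. First, I would use Observation~\ref{obs-2} to construct an explicit pool of candidates for $\lambda^*$: for every ordered pair $(i,j) \in [t] \times [t]$ and every $s \in [k]$, let $\lambda_{ij,s}$ denote the positive root of the equation $r_j(\lambda) - l_i(\lambda) = s\lambda$, obtainable in $O(1)$ time from the coefficients of the two polynomials. By Observation~\ref{obs-2}, the optimum $\lambda^*$ coincides with some $\lambda_{ij,s}$ with $s \le k$, so organizing these candidates into $t^2$ arrays $A_{ij} = (\lambda_{ij,1}, \ldots, \lambda_{ij,k})$ of length $k$ exposes $\lambda^*$ as a matrix entry.

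Next, I would verify that each $A_{ij}$ is sorted. Observation~\ref{obs-3.5} implies that $l_i(\lambda)$ is non-decreasing and $r_j(\lambda)$ is non-increasing in $\lambda$, so $r_j(\lambda) - l_i(\lambda)$ is non-increasing in $\lambda$, while $s\lambda$ is strictly increasing in $\lambda$; hence $\lambda_{ij,s}$ is monotone in $s$, making $A_{ij}$ sorted. For the feasibility test, I would take the \DCOFL\ routine in Algorithm~\ref{alg-3}, which, given $\lambda$, re-evaluates the $2t$ endpoint polynomials in $O(t)$ time and runs the greedy counting pass in $O(t \log k)$ time by Lemma~\ref{lem-2}. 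The test is monotone in the correct direction: if $\lambda \le \lambda^*$ then $A(\lambda) \ge k$ and otherwise $A(\lambda) < k$, so $\lambda^*$ is exactly the largest value satisfying $A(\lambda) \ge k$, which fits the template of Lemma~\ref{lem-20} after reversing the inequality.

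Applying Lemma~\ref{lem-20} with $M = t^2$ sorted arrays of size $N = k$ then decomposes the cost into $O(\log(t^2 + k))$ feasibility tests at $O(t \log k)$ each, plus $O(M \log N) = O(t^2 \log k)$ matrix-search overhead, summing to the claimed $O(t^2 \log k + t \log k \log(t^2 + k))$. The main obstacle I anticipate is justifying Observation~\ref{obs-2}: I must argue that at the optimum some pair of (not necessarily co-interval) endpoints $l_i(\lambda^*)$ and $r_j(\lambda^*)$ is spanned by an integer multiple of $\lambda^*$. Unlike the square case in Section~\ref{matrix}, interference between adjacent feasible intervals means the binding chain of greedily placed centers can begin in one feasible interval and terminate in a later one, which is precisely why the candidate pool must enumerate all pairs $(i,j)$ rather than only the diagonal $i = j$; a local perturbation argument on the greedy placement at $\lambda^*$ should isolate exactly one such critical pair and confirm that no other candidate forms need to be enumerated.
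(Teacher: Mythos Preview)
Your proposal is correct and follows essentially the same approach as the paper: enumerate the $t^2$ endpoint pairs from Observation~\ref{obs-2}, build for each a length-$k$ sorted array of candidate roots (monotonicity via Observation~\ref{obs-3.5}), use the greedy decision procedure of Lemma~\ref{lem-2} as the feasibility test, and invoke the matrix-search Lemma~\ref{lem-20} to obtain the stated bound. Your added discussion of why all pairs $(i,j)$ (not just the diagonal) must be enumerated, and the perturbation argument you sketch for Observation~\ref{obs-2}, are more explicit than the paper, which simply states that observation without proof.
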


Next, we can conclude our final theorem for the \COFL\ problem.
\begin{theorem}
The \COFL\ problem can be solved in $O(n^2 \log k + n \log k \log  (n^2 + k))$ time.
\end{theorem}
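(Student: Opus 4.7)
The plan is to combine the $O(n\log n)$ decision algorithm for \DCOFL\ with the matrix search of Lemma~\ref{lem-20}, exactly in the spirit of the lemma that immediately precedes this theorem. The key reduction is supplied by Observation~\ref{obs-2}: the optimal radius $\lambda^*$ must satisfy $r_j(\lambda^*)-l_i(\lambda^*)=m\lambda^*$ for some ordered pair of feasible interval endpoints $l_i(\lambda),r_j(\lambda)\in\Finterval$ and some integer $m\in[k]$. So the set of candidates for $\lambda^*$ is small and has enough structure to search through.

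First I would bound the combinatorial complexity of the candidate set. The $n$ demand points yield at most $n$ forbidden intervals, hence $t\le n+1$ feasible intervals, each endpoint being a simple algebraic function of $\lambda$ (namely $x_i\pm\sqrt{\lambda^2-y_i^2}$). For each ordered pair $(i,j)$ of feasible interval endpoints and each $m\in[k]$, the candidate $\lambda^{(i,j)}_m$ is the root of the low-degree equation $r_j(\lambda)-l_i(\lambda)=m\lambda$, computable in $O(1)$ time. By Observation~\ref{obs-3.5} the quantity $r_j(\lambda)-l_i(\lambda)$ is nondecreasing in $\lambda$, so the sequence $\{\lambda^{(i,j)}_m\}_{m=1}^k$ is monotone and therefore implicitly sorted. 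This gives us $M=O(n^2)$ implicitly sorted arrays of length $N\le k$ with $O(1)$-time entry access, matching exactly the input format required by Lemma~\ref{lem-20}.

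Second, I would invoke Lemma~\ref{lem-20} with the \DCOFL\ algorithm as the feasibility test. The predicate ``$A(\lambda)\ge k$'' is monotone in $\lambda$, as required. Lemma~\ref{lem-20} performs $O(\log(n^2+k))$ feasibility tests together with $O(M\log N)=O(n^2\log k)$ auxiliary work. With the demand points pre-sorted by $x$-coordinate once in $O(n\log n)$ time (absorbed in the overhead), each invocation of the \DCOFL\ decision routine runs in $O(n\log k)$: the sweep of Lemma~\ref{lem-2} is linear in the number of intervals and the only nontrivial per-interval cost is the floor computation, which is bounded by $O(\log k)$ since the counts are capped at $k$. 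Summing gives $O(n^2\log k)+O(\log(n^2+k))\cdot O(n\log k)=O(n^2\log k+n\log k\log(n^2+k))$, the claimed bound.

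The main obstacle I anticipate is justifying the $O(n\log k)$ per-test cost, since forbidden interval endpoints depend on $\lambda$ and so are not in a single static sorted order. I would handle this by observing that every forbidden interval is centered at the corresponding $x_i$, so sorting the points once by $x_i$ suffices: for any $\lambda$, computing the union of forbidden intervals and extracting $\Finterval$ can be done by a single left-to-right sweep over the pre-sorted points, with the greedy placement of Lemma~\ref{lem-2} interleaved. The correctness of that greedy sweep in the presence of inter-interval interference is already established, and the monotonicity of the matrix search predicate is immediate from the definition of \COFL, so the theorem follows by assembling these pieces.
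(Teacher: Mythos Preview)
Your proposal is essentially the paper's own argument: reduce via Observation~\ref{obs-2} to $O(n^2)$ implicitly sorted length-$k$ arrays of candidate radii, then apply Lemma~\ref{lem-20} with the \DCOFL\ routine as feasibility test. Two small slips are worth fixing. First, by Observation~\ref{obs-3.5} the feasible-interval gap $r_j(\lambda)-l_i(\lambda)$ is \emph{nonincreasing} in $\lambda$ (feasible intervals shrink as disks grow), not nondecreasing; this reverses the order of each array but of course leaves it sorted, so the matrix search still applies. Second, your pre-sort justification is not sound as stated: forbidden intervals have half-width $\sqrt{\lambda^2-y_i^2}$, so sorting by $x_i$ alone does not give the endpoints in order, and a single sweep over the $x_i$-sorted points can fail to compute $\Finterval$ correctly. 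This does not threaten the theorem, however, since even paying the full $O(n\log n)$ sort inside every feasibility test yields $O(n\log n\log(n^2+k))$ total test time, which is dominated by the $O(n^2\log k)$ matrix-search overhead.
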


\subsection{\CCOFL}

Recall that in \CCOFL, we are given a set $P=\{p_1, p_2, \cdots, p_n\}$ of $n$ demand points in the plane, a predetermined circle $\circle$ with radius $r_c$ and a positive integer $k$.
We must locate $k$ facility sites $C=\{c_1,\ldots,c_k\}$ on the boundary arc $\partial \circle$.
Our goal is to maximize the Euclidean distance from any demand point in $P$ to its closest facility and the mutual distance between any two adjacent facilities. 
Formally, our goal is to maximize
$
\min\{\max_{i\in [n]}\min_{j\in [k]} d(p_i, c_j), 
\alpha\cdot\max_{i\in [k]} d(c_i, c_{i+1})
\}
$
, where $c_{k+1}$ is understood as $c_1$.
We first solve the corresponding decision problem efficiently.


\begin{center}
 \begin{figure}[t]
  \centering    
  \includegraphics[scale=0.16]{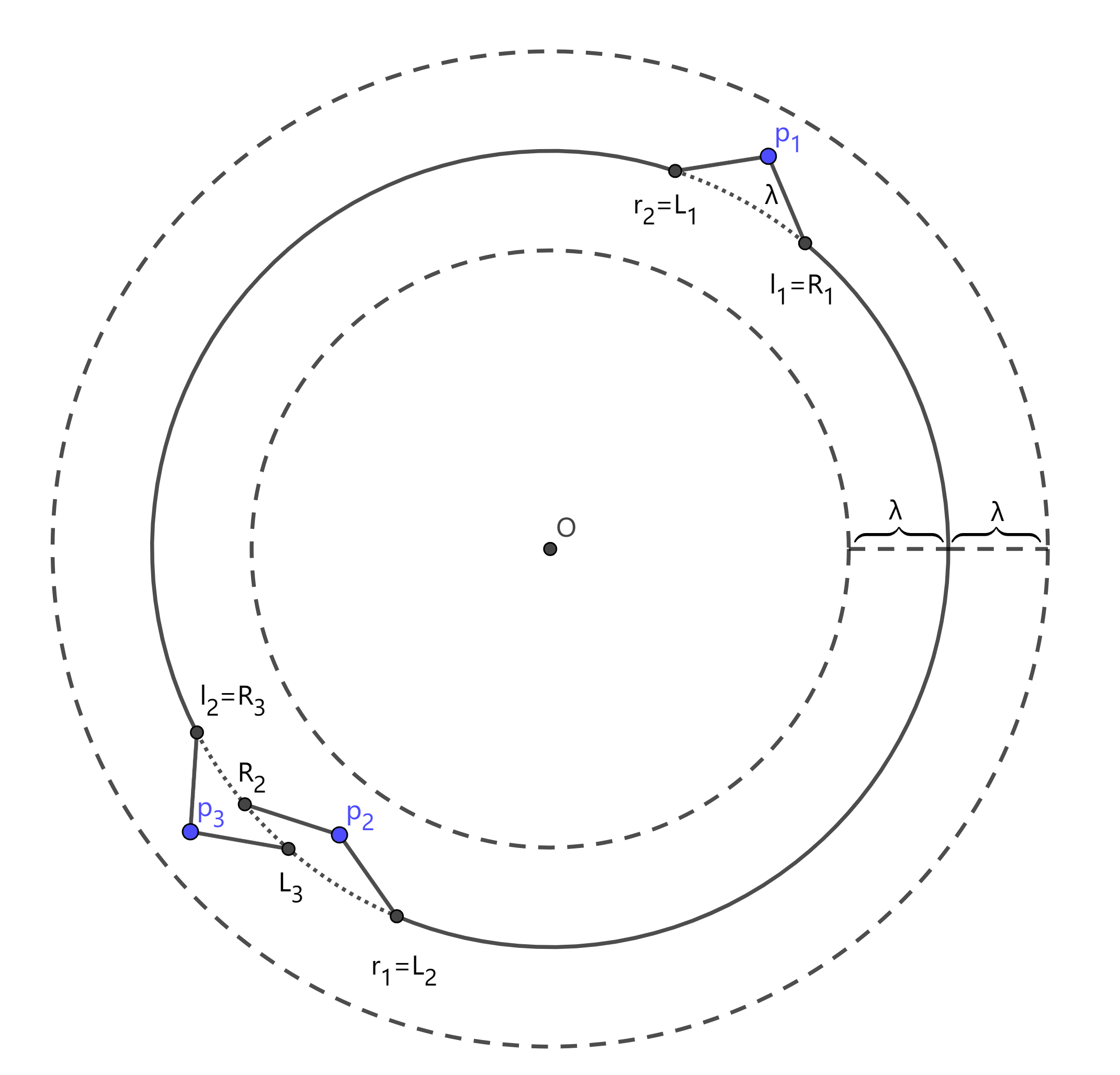}
  \caption{$(L_1,R_1),(L_2,R_2),(L_3,R_3)$ are forbidden intervals corresponding to obnoxious facilities $p_1,p_2,p_3$. $[l_1,r_1], [l_2,r_2]$ are feasible intervals.}
  \label{fig:birds}
\end{figure}
\end{center}

\subsubsection{Decision Version: \DCCOFL}
We first consider the decision version of \DCCOFL.
In the decision version, we are given a certain value $\lambda>0$ and we must determine if the maximum number of facilities that we can place is no less than $k$
such that the distance between each pair of facilities and any facility to any node is not less than $\lambda$.
The \DCCOFL\ problem is similar to the \DCOFL\ problem. However, we cannot directly use the greedy algorithm because we do not know where we should place the first center. 
In approximate terms, 
if a problem can be solved greedily on a line segment in $O(T)$ time,
it can be solved on a circle in $O(nT)$ time, by testing $n$ different starting points
in the circle and reducing it to the problem on a line.
Hence, if we enumerate where to place the first center and execute the greedy algorithm (Algorithm~\ref{alg-3}),
the execution time is $O(n^2\log k)$.
In this section, we present a more efficient algorithm.
First, we formulate the problem as follows.


\begin{definition} \DCCOFL.

Given $n$ disjoint feasible intervals $\Finterval=\{[l_1,r_1], \cdots, [l_n,r_n]\}$ on a circular ring, we must determine the maximum number of centers that we can place on $\Finterval$, denoted by $k^*$, such that the Euclidean distance between each pair of centers is no less than $\lambda$. Return true if $k^* \ge k$; otherwise, return false.
 \end{definition}
 
To simplify the analysis, we assume that the float division can be accomplished in $O(1)$ time.
We state that a placement of centers is {\em optimal} if no other placement can place more centers.
 
\begin{observation}
We can find at least one optimal placement such that we can place a node on one endpoint of an interval in $\Finterval$.
\end{observation}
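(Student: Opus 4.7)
The plan is to prove the observation by an exchange-argument style perturbation. Start from an arbitrary optimal placement $C^*=\{c_1,\ldots,c_{k^*}\}$ on the circle $\mathbb{C}$. If some $c_j$ already coincides with an endpoint of a feasible interval, we are done. Otherwise, every center lies strictly in the interior of some feasible interval in $\mathcal{F}$, and the plan is to construct another optimal placement (same cardinality, same constraints satisfied) in which at least one center touches an endpoint.

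The key idea is a rigid rotation of the entire configuration. First I would observe that rotating all centers simultaneously by the same angle $\theta$ around the center of $\mathbb{C}$ is an isometry of the plane, so every pairwise Euclidean distance $d(c_i,c_j)$ is preserved. In particular the adjacency constraint $d(c_i,c_{i+1}) \ge \lambda$ (equivalently, the arc-length separation) continues to hold for all rotated centers. Hence the only constraint that can be violated under rotation is the membership constraint: each $c_j$ must lie inside some feasible interval of $\mathcal{F}$.

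Next I would quantify how much we can safely rotate. Fix a rotation direction, say clockwise. For each center $c_j$ in the interior of its feasible interval $[l_{i(j)},r_{i(j)}]$, let $\epsilon_j>0$ denote the arc-length distance from $c_j$ to the counterclockwise endpoint $l_{i(j)}$ of that interval. By assumption all $\epsilon_j$ are strictly positive, so $\epsilon^* := \min_j \epsilon_j > 0$. Rotating the whole configuration clockwise by exactly $\epsilon^*$ keeps each center inside its current feasible interval (since we rotate by at most its own slack), and by construction at least one center, namely any $c_{j^*}$ achieving the minimum, is driven precisely onto the endpoint $l_{i(j^*)}$. The rotated placement therefore satisfies all feasibility-interval and pairwise-distance constraints, uses the same number $k^*$ of centers, and has at least one center at an endpoint of $\mathcal{F}$, which is exactly the desired optimal placement.

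I do not anticipate a real obstacle here; the only subtle points to check carefully are (i) that the Euclidean distance constraint between centers is invariant under a rigid rotation of $\mathbb{C}$, which is immediate since rotation about the circle's center is an isometry, and (ii) that the minimum $\epsilon^*$ is attained and strictly positive, which follows from the finiteness of $C^*$ and our standing assumption that no center is initially at an endpoint. If all $\epsilon_j$ were zero, the original placement would already satisfy the claim, completing the proof.
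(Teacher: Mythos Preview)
Your rotation argument is the natural one, and it is essentially correct. The paper actually states this observation without proof, so there is no ``paper's proof'' to compare against; your write-up supplies what the paper leaves to the reader.

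One small directional slip to fix: you define $\epsilon_j$ as the arc-distance from $c_j$ to the \emph{counterclockwise} endpoint $l_{i(j)}$, but then rotate the configuration \emph{clockwise}. A clockwise rotation moves each $c_j$ away from $l_{i(j)}$ and toward the clockwise endpoint $r_{i(j)}$, so the inequality $\epsilon^*\le\epsilon_j$ no longer controls whether $c_j$ stays in its interval, and no center is driven onto $l_{i(j^*)}$. Either rotate counterclockwise, or redefine $\epsilon_j$ as the distance to the clockwise endpoint; with either correction the argument goes through verbatim. The two points you flag as subtle---(i) invariance of Euclidean distances under rotation about the circle's center and (ii) strict positivity of $\epsilon^*$ when no center is at an endpoint---are indeed the only things to check, and both are immediate.
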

 
 
\begin{observation}
If the length of a forbidden interval is at least $\lambda$, we can directly reduce the \DCCOFL\ problem to the \DCOFL\ problem.
\end{observation}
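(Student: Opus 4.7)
The plan is to exploit a sufficiently long forbidden interval as a natural cut point on the circle, so that the cyclic decision problem collapses to a linear one. Let $f = [L, R]$ be the forbidden interval whose length is at least $\lambda$.

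First, I would observe that in any valid placement no center can lie inside $f$ (by the definition of a forbidden interval), so every center sits on the complementary arc $A = \partial \circle \setminus f$. Consequently, the cyclic order of the $k$ centers induces a linear order once we traverse $A$ from the endpoint $R$ to the endpoint $L$. Let $c_{\text{left}}$ and $c_{\text{right}}$ denote the two centers closest to $L$ and $R$ respectively; in the cyclic order these two are adjacent through $f$, so the only distance constraint that could be lost by cutting the circle at $f$ is the wrap-around constraint $d(c_{\text{left}}, c_{\text{right}}) \ge \lambda$.

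Second, I would show that this wrap-around constraint is automatic under the hypothesis $|f| \ge \lambda$. Since $c_{\text{left}}$ lies on one side of $f$ and $c_{\text{right}}$ on the other, their Euclidean separation is at least the chord distance between two points on opposite ends of $f$; if we measure the length of a forbidden interval by the chord distance between its endpoints, consistent with the line-segment case of \DCOFL, then this chord is already $\ge \lambda$ and the constraint holds for free. The main subtlety I expect here is bookkeeping: one must confirm that the notion of ``length'' used for forbidden intervals translates into chord distance in the plane, or else commit explicitly to the chord-length definition throughout; I would dispense with the issue by adopting the chord metric from the outset, so that the bound $d(c_{\text{left}}, c_{\text{right}}) \ge |f| \ge \lambda$ is immediate.

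Third, with the wrap-around constraint eliminated, I would cut $\partial \circle$ along $f$ and unroll the remaining arc $A$ into a line segment carrying the other $n-1$ forbidden intervals. Valid cyclic placements on $\circle$ then correspond bijectively to valid linear placements on this unrolled segment, so the counts of admissible centers agree. I can therefore invoke the \DCOFL\ algorithm on the resulting linear instance and read off the answer to \DCCOFL\ in the same asymptotic time, which is what the observation claims.
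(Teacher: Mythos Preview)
Your strategy---arguing that the wrap-around constraint $d(c_{\text{left}},c_{\text{right}})\ge\lambda$ holds automatically, so that cutting at $f$ gives a bijection between valid cyclic and valid linear placements---differs from the paper's two-line argument, which instead asserts that some optimal circular placement has a center at an endpoint $l$ or $r$ of the long forbidden interval and cuts there.

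However, the geometric core of your second step is wrong as written. You claim that because $c_{\text{left}}$ and $c_{\text{right}}$ lie ``on opposite sides'' of $f$, their Euclidean separation is at least the chord $|f|=d(L,R)$. On a circle this fails: the arc from $c_{\text{left}}$ through $f$ to $c_{\text{right}}$ is indeed at least $\operatorname{arc}(f)$, but chord length is monotone in arc length only up to a semicircle, so a longer arc need not give a longer chord. For a concrete counterexample, take a circle of circumference $10$, a forbidden arc $f$ of arc-length $2$ (so $d(L,R)\approx 1.87$), $\lambda=1$, and place two centers at arc-distances $3.4$ and $4.6$ from $R$ along $A$. They satisfy the linear constraint (chord $\approx 1.17\ge 1$), yet $1.17<1.87$, so your asserted inequality $d(c_{\text{left}},c_{\text{right}})\ge |f|$ is violated.

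What you actually need, $d(c_{\text{left}},c_{\text{right}})\ge\lambda$, \emph{is} true but requires a case split your ``opposite sides'' picture does not supply. If the arc from $c_{\text{left}}$ through $f$ to $c_{\text{right}}$ is at most a semicircle, chord monotonicity gives $d\ge d(L,R)\ge\lambda$. Otherwise the complementary arc through $A$ is at most a semicircle; it contains the consecutive pair $c_{\text{right}}=c_1,c_2$, and since $\operatorname{arc}_A(c_1,c_k)\ge\operatorname{arc}_A(c_1,c_2)$ within that semicircle, chord monotonicity and the linear constraint give $d(c_1,c_k)\ge d(c_1,c_2)\ge\lambda$. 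With this split your bijection argument goes through; you were right to insist on the chord metric for $|f|$, since the observation can fail if ``length'' is read as arc length.
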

 
\begin{proof}
If the length of a forbidden interval $[l,r]$ is not less than $\lambda$, clearly at least one optimal solution can place a node on $l$ or $r$. Hence, we can cut the circular ring here and the problem is the same as the \DCCOFL\ problem.
\end{proof}
 
Therefore, we can assume that in the \DCCOFL\ problem, the length of forbidden intervals
is less than $\lambda$.
 
Without loss of generality, consider placing the first center on an endpoint (e.g., $r_i$). We place centers one by one $\lambda$ distance away (e.g., in the clockwise direction) until the center that we place lies in one of the forbidden intervals. Suppose the corresponding forbidden interval is $[l_j,r_j]$.
Then, we place the center at $r_j$. We call this process a {\em jump} from $i$ to $j$ and denote this jump as $N[i][0] = j$. We also denote $C[i][0]$ as the number of centers that we place during this process. We also use $N[i][j]$ to denote that after $2^j$ jumps, $i$ jumps to $N[i][j]$ (See Figure~\ref{fig:jump} for an example).
 
 \begin{figure}[h!]
  \includegraphics[scale=0.3]{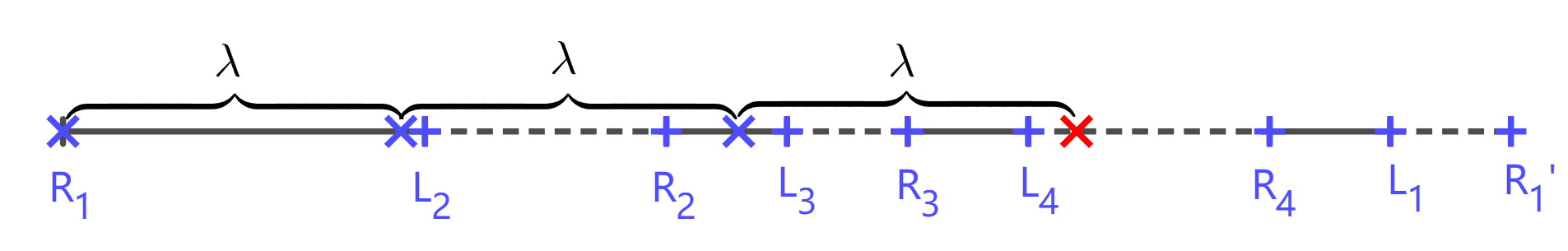}
  \caption{Illustration of a jump. To start, we place a center at $R_1$.
  If we place the centers one by one $\lambda$ distance away, the 4th
  center would lie in the 4th forbidden interval (the red cross). 
  Therefore, we must place it at $R_4$. Hence, $N[1][0]=4$.}
  \label{fig:jump}
\end{figure}
 
 \begin{paragraph}{Persistent segment trees.}
 We require a segment tree data structure, which we briefly review here (see, e.g., \cite{bentley1980optimal}). 
 A segment tree is a data structure that stores the information in an array as a tree. This structure allows efficient answering of range queries over an array, and allows quick modification of the array. It supports finding the minimum or sum of any range of consecutive array elements in  
 $O(\log n)$ time, where $n$ is the length of the array. It also allows us to modify the array online by adding a value to an array element or modifying the values of a range 
 (e.g., assigning a value to all elements, or adding a value to all elements in the range). 
 We can build a segment tree in $O(n)$ time.  
 
 We require a persistent version of the segment tree, to efficiently query intermediate versions of 
 the segment tree.
 A persistent data structure is a data structure that preserves its previous version when it is modified. A data structure is partially persistent if all versions can be accessed yet only the newest version can be modified. For a detailed introduction of persistent data structure, see \cite{kaplan2018persistent}.
 
 We can apply partial persistency in the segment tree and also ensure that it requires $O(\log n)$ time for each query or modification.
 
 For an array $A[i](1\le i\le n)$, we define the \textbf{ADD} and \textbf{QUERY} operations as follows:
 \begin{enumerate}
     \item \textbf{ADD}(i,j): increase all values of $A[x]$ ($i\le x\le j$) by one.
 
     \item \textbf{QUERY}(i,t): return the value of $A[i]$ after $t$ ADD operations.
  \end{enumerate}
 
Using the persistent segment tree data structure, we can support \textbf{ADD} and \textbf{QUERY} operations in $O(\log n)$ time. 
Next, we confirm that we can compute $N[i][0]$ efficiently using the persistent segment tree.
 
 \end{paragraph}
 
\begin{theorem}
We can compute the values of $N[i][0]$ for $1\le i \le n$ in $O(n \log ^ 2 n)$ time.
\end{theorem}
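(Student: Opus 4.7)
The plan is to compute $N[i][0]$ for all $i \in [n]$ by an offline parallel binary search on the greedy's first-hit time, supported by the persistent segment tree from the previous subsection. For each starting position $R_i$, I define $t^*_i$ as the smallest $t \ge 1$ such that $R_i + t\lambda \bmod L$ (where $L$ is the circumference of $\circle$) lies in some forbidden interval; then $N[i][0]$ is the index of that forbidden interval, recoverable from $t^*_i$ by one binary search on the sorted forbidden-interval endpoints in $O(\log n)$ time.

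To compute all $t^*_i$'s in parallel, I would run $O(\log n)$ rounds of binary search, halving each index's search window per round. In each round, the test for index $i$ at midpoint $m_i$ is whether $R_i$ is contained in the cumulative pre-image set $S_t := \bigcup_{s=1}^t (\Forbinterval - s\lambda) \bmod L$ at $t = m_i$, where $\Forbinterval$ denotes the union of forbidden intervals. I maintain one persistent segment tree indexed by the $O(n)$ endpoints of the feasible/forbidden intervals on $\circle$, whose version $V_t$ represents the indicator of $S_t$: an \textbf{ADD} inserts pre-image intervals as $t$ advances, and a \textbf{QUERY} tests whether $R_i$ is covered. Within each round I sort the queries by $m_i$, sweep in increasing order, and monotonically advance the tree's version; because the tree is shared across all $O(\log n)$ rounds, the cumulative \textbf{ADD} work is bounded by the largest version ever reached.

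The main obstacle is bounding the total \textbf{ADD} cost, since naively each time step contributes $n$ pre-image intervals (one per forbidden interval) and the first-hit time can be $\Theta(L/\lambda)$. I plan to absorb this cost with two arguments. First, the first-hit time is effectively $O(n)$: otherwise the greedy on the circle places so many centers without interruption that the decision $k^* \ge k$ is already resolved. Second, each forbidden interval's pre-images are rigid translates of a fixed length-${<}\lambda$ arc whose cumulative union within a single revolution forms $O(1)$ arcs on $\circle$, so the $n$ pre-images per step can be inserted in $O(1)$ amortized \textbf{ADD}s per forbidden interval per revolution. Together these give $O(n)$ total \textbf{ADD}s and $O(n\log n)$ total \textbf{QUERY}s, each of cost $O(\log n)$, for an overall running time of $O(n\log^2 n)$ as claimed.
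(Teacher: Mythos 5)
Your overall architecture (a persistent segment tree whose versions accumulate interval insertions, plus a binary search over versions to find the first version covering a query point) matches the paper's, but you are missing the one idea that makes the complexity work: the reduction modulo $\lambda$. The paper maps every forbidden endpoint to its residue $|x-l_1|\bmod\lambda$, after which ``the progression $r_i, r_i+\lambda, r_i+2\lambda,\dots$ eventually lands in forbidden interval $j$'' becomes the \emph{static} condition $r_i'\in[l_j',r_j']$, with no reference to how many steps that takes. The tree is therefore versioned by the circular position order of the $n$ intervals (each inserted twice to handle wraparound), giving only $2n$ versions and $2n$ \textbf{ADD}s in total, and the binary search over versions directly returns the first index $j$ after $i$ whose residue interval stabs $r_i'$. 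Your version axis is instead the actual step count $t$, and both of the arguments you offer to control it fail. First, the first-hit time is not $O(n)$: it is governed by $L/\lambda$ (and by how long the progression dodges the forbidden set), and the ``the decision is already resolved'' escape only caps it at roughly $k$, which would put a factor of $k$ rather than $n$ into the bound. Second, the pre-images $(f_j - s\lambda)\bmod L$ of a single forbidden interval within one revolution are $\Theta(L/\lambda)$ \emph{pairwise disjoint} arcs --- each has length less than $\lambda$ and consecutive ones are exactly $\lambda$ apart --- so their union has $\Theta(L/\lambda)$ connected components, not $O(1)$, and your total \textbf{ADD} count is $\Theta(n\cdot\min(k,\lfloor L/\lambda\rfloor))$ rather than $O(n)$.

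Both defects are cured simultaneously by folding the circle modulo $\lambda$: all pre-images of forbidden interval $j$ collapse onto the single arc $[l_j',r_j']$ of a circle of circumference $\lambda$ (split into two arcs when $l_j'>r_j'$), so each interval contributes $O(1)$ insertions, and the quantity you binary-search over becomes the interval index (bounded by $2n$) rather than the unbounded step count. With that substitution your query structure and the $O(\log^2 n)$ per-index binary search go through exactly as in the paper; without it, the stated $O(n\log^2 n)$ bound does not follow from your argument.
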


\begin{proof}
Suppose $[l_1,r_1]$ is a feasible interval and we set $l_1$ as the starting point. 
We use $\%$ to denote a float mod operation, which means for any nonnegative real number $a,b$, $a \% b = a - \lfloor a / b \rfloor \times b$. We transform an interval $[l_i,r_i]$ in $\Finterval$ to 
interval $[l_i',r_i']$, where $l_i' = |l_i - l_1| \% \lambda$ and $r_i' = |r_i - l_1| \% \lambda$. 
Note that if $l_i' > r_i'$, this interval actually means two intervals, $[r_i',\lambda]$ and $[0,l_i']$.

The key observation is that there is a jump from $i$ to $j$ (i.e., 
$N[i][0]=j$) iff 
$l_i \% \lambda \in [l_j',r_j']$ 
(or $l_i+k\lambda \in [l_j,r_j]$ for a positive integer $k$)
and $j$ is closest to $i$ among these indices ($j$ is the first such point).
Hence, to compute $N[i][0]$, we must determine the minimum $j$ such that $l'_i \in (l'_j, r'_j)$.

Next, we demonstrate how to compute $N[i][0]$s using a persistent segment trees.
We first sort all $l_i'$ and $r_i'$ for $1\le i \le n$ and store it in an array $A$. We define a function $\Index(x)$ where $x$ is either a $l_i$ or $r_i$ and 
$\Index(x)$ returns the index of $x$ in $A$.
Note that $A[\Index(x)] = x$.

We build a persistent segment tree to maintain array $A$. 
To insert an interval $[l_i',r_i']$, we perform \textbf{ADD}($\Index(l_i'),\Index(r_i')$).
We insert the intervals $[l_1',r_1'],\cdots, [l_n',r_n']$ into the persistent segment tree
one by one in this order, and we repeat this operation (because of the circularity). 

For any $i$, to determine $N[i][0]$, we perform a binary search on $t$ to find the maximum $t^*$ such that the result of \textbf{QUERY}($\Index(r_i'),t^*)$) equals that of \textbf{QUERY}($\Index(r_i'),i$). Then, $N[i][0] = t^*$.

Finally, we calculate the execution time. We insert $2n$ intervals in $O(n\log n)$ time.
Computing each $N[i][0]$ for any $1\le i \le n$ costs $O(\log^2 n)$ time (each binary search step
is a \textbf{QUERY} that requires $O(\log n)$ time). Hence, the overall time complexity is $O(n \log^2 n)$.
\end{proof}

Now, we have $N[i][0]$ for $1\le i \le n$. Then, we can do the binary lifting on $N$ using Algorithm \ref{alg-1} to compute $N[i][j]$ for a greater $j$. 

Note that $C[i][j]$ is the number of centers that we place during $i$ using $2^j$ jumps. Therefore, $C[i][0]$ is the number of centers that we can place between $r_i$ and $N[i][0]$. We denote the distance between $r_i$ and the left endpoints corresponding to $N[i][0]$ as $ d(\textsf{index}(r'_i),N[i][0] - 1)$ and $C[i][0] = \lfloor d(\textsf{index}(r'_i),N[i][0] - 1) / \lambda \rfloor $.

\begin{algorithm}
	\caption{Binary\_Lifting$(N, n, m)$} 
	\label{alg-1}
	\begin{algorithmic}[1]
		\For {$i=1$ to $2n$}
		\State $ \textsf{C[i][0]} \gets \lfloor d(\textsf{index}(r_i),N[i][0] - 1)  / \lambda \rfloor + 1 $
		\EndFor
		\For {$i=1$ to $2n$}
			\For {$j=1$ to $m$}
				\State $\textsf{N[i][j]} \gets \textsf{N[N[i][j - 1]][j-1]}$
				\State $\textsf{C[i][j]} \gets \textsf{C[i][j-1]} + \textsf{C[N[i][j - 1]][j-1]}$
			\EndFor
		\EndFor
	\end{algorithmic} 
\end{algorithm}

Now, we can obtain the number of nodes that we can place if we begin by placing a node on $l_i$, denoted as \textsf{num}, in $O(\log n)$ time, as indicated in Algorithm \ref{alg-2}. 

We must simulate the placing process by jumping. Note that we require at most $k$ jumps because we place at least one center during a jump. Now, let $step = \lfloor \log k
\rfloor$. If after $2^\textsf{step}$ jumps, $x$ jumps over $x + n$, meaning that the number of jumps that we require is less than $2^\textsf{step}$. Therefore, we reduce the \textsf{step} by one and make the next attempt. Otherwise, we simply take this $2^\textsf{step}$ jumps and make the next attempt. This process terminates when $\textsf{step} < 0$.


\begin{algorithm}
	\caption{Cal$(x, n, k)$} 
	\label{alg-2}
	\begin{algorithmic}[1]
	    \State $\textsf{step} = \lfloor \log k
\rfloor$
	    \State $\textsf{target} = x + n$
	    \State $\textsf{num} = 0$
		\While {step $\ge 0$}
			\If {$\textsf{N[x][step]} \le \textsf{target}$}
				\State $x \gets \textsf{N[x][step]}$
				\State $\textsf{num} \gets \textsf{num} + \textsf{C[x][step]}$
			\EndIf
			\State $\textsf{step} \gets \textsf{step} -1$
		\EndWhile
	\State $\textsf{num} \gets \textsf{num} + \lfloor \textsf{d(x, target)}  / \lambda \rfloor$ 
	\State \textbf{return} \textsf{num}
	\end{algorithmic} 
\end{algorithm}

\begin{theorem}
We can solve the decision version \DCCOFL\ in $O(n \log^2 n + n \log k)$ time.
\end{theorem}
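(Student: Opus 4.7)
The plan is to combine the two data-structural ingredients already set up (the persistent segment tree giving the $N[i][0]$ table, and the binary lifting tables $N[i][j]$, $C[i][j]$) with the observation that some optimal placement must put a center at an endpoint of a feasible interval. Thus it is enough, for each of the $O(n)$ feasible-interval endpoints, to simulate the greedy sweep starting from that endpoint and check whether at least $k$ centers fit before wrapping all the way around the circle. The answer to \DCCOFL\ is ``yes'' if and only if this holds for at least one starting endpoint.

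The steps I would carry out, in order, are as follows. First, duplicate the circular list of feasible intervals into a list of length $2n$ so that the wrap-around sweep starting at index $i$ terminates at index $i+n$; this is the role of the $+n$ shift appearing in Algorithm~\ref{alg-2}. Second, invoke the previous theorem to compute $N[i][0]$ for all $1\le i\le 2n$ in $O(n\log^2 n)$ time, and set $C[i][0]$ to the number of centers placed during that single jump (as in line~2 of Algorithm~\ref{alg-1}). Third, run Algorithm~\ref{alg-1} with $m=\lceil\log k\rceil$: since any feasible sweep that places at most $k$ centers consists of at most $k$ jumps, $\lceil\log k\rceil$ levels of doubling suffice. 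This step is $O(n\log k)$ because each of the $2n$ rows is filled in $O(\log k)$ time with $O(1)$ work per entry. Fourth, for each starting endpoint $l_i$, call Algorithm~\ref{alg-2} to compute the total number of centers placeable before reaching $i+n$; the algorithm walks the binary-lifting tables from the highest level down, so each call runs in $O(\log k)$ time. Fifth, return ``yes'' as soon as any starting point yields a count of at least $k$, and ``no'' otherwise.

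The correctness argument rests on two points that I would verify explicitly. The jump relation $N[i][0]=j$, together with $C[i][0]$, faithfully encodes the greedy placement: starting from $r_i$ and stepping by $\lambda$, we place $C[i][0]$ centers and arrive at $r_j$ precisely because all intervening forbidden intervals have length less than $\lambda$ (so the only landings that must be corrected are the ones hitting the single forbidden interval indexed by $j$). Standard binary-lifting induction then shows $N[i][j]$ is the index reached after $2^j$ jumps and $C[i][j]$ is the total number of centers placed along the way. The wrap-around condition in Algorithm~\ref{alg-2} (comparing $N[x][\textsf{step}]$ against $x+n$) ensures we never count more than one full loop, and the final term $\lfloor d(x,\textsf{target})/\lambda\rfloor$ accounts for the tail of uniformly spaced centers that fit in the remaining arc after the last jump.

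The subtlest step, which I would treat as the main obstacle, is combining the three pieces so that the total count is exactly the maximum number of centers placeable when the first center is pinned at $l_i$. In particular, I need to check that the greedy rule (always advance by exactly $\lambda$, and snap past a forbidden interval only when unavoidable) really is optimal on the circle once the starting position is fixed, and that Algorithm~\ref{alg-2}'s comparison $N[x][\textsf{step}]\le \textsf{target}$ correctly handles the boundary case where a jump lands exactly at or just before $x+n$. Once these are in place, adding the bounds $O(n\log^2 n)$ for preprocessing, $O(n\log k)$ for binary lifting, and $O(n\log k)$ for the $n$ queries yields the claimed $O(n\log^2 n + n\log k)$ total running time.
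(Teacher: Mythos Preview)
Your proposal is correct and follows essentially the same approach as the paper: the paper also combines the $O(n\log^2 n)$ persistent-segment-tree computation of $N[i][0]$ with the binary-lifting tables of Algorithm~\ref{alg-1} (built with $m=O(\log k)$ levels) and then invokes Algorithm~\ref{alg-2} once per candidate starting endpoint, for a total of $O(n\log^2 n + n\log k)$. Your identification of the key correctness points—the endpoint observation, greedy optimality once the start is fixed, and the wrap-around bookkeeping via the doubled index range—matches the paper's (largely implicit) justification.
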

 
\subsubsection{Maximizing the radius}

Next, we solve the  \CCOFL\ optimization problem.
Again, we can solve the \CCOFL\ problem using the same matrix search technique used for the \COFL\ problem. 
For each pair of left endpoints of a forbidden interval $l_i(\lambda)$ and $r_j(\lambda)$ as a right endpoint of a forbidden interval, we can generate a sorted list as $\{\lambda_{1}, \lambda_{2}, \cdots, \lambda_{k}\}$, where $\lambda_t$ denotes the root of the equation $r_j(\lambda) - l_i(\lambda) = t \lambda $ and $\lambda_t \ge \lambda_{t+1}$ holds.  Then, we use Lemma \ref{lem-20} to find $\lambda^*$ using $\log(n^2 + k)$ feasibility tests 
(Algorithm~\ref{alg-2} of the \CCOFL\ problem).

Then, we can state the result for the \CCOFL\ problem.
\begin{theorem}
The \CCOFL\ problem can be solved in $O(n ^ 2 \log k + n (\log^2 n + \log k) \log(n^2 + k) )$ time.
\end{theorem}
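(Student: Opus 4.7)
The plan is to reduce the optimization problem to a matrix search instance and invoke Lemma~\ref{lem-20}, using the \DCCOFL\ decision procedure from the previous subsection as the feasibility test. The overall structure mirrors the approach we used for \COFL, but on a circular domain.

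First, I would establish the analogue of Observation~\ref{obs-2} for the circular case: at the optimum radius $\lambda^*$, there must exist two endpoints of forbidden intervals, say $l_i(\lambda^*)$ and $r_j(\lambda^*)$, whose (arc) distance on $\partial\circle$ is an integer multiple of $\lambda^*$. The justification is the same as in the line case: if no such tight pair existed, one could perturb $\lambda$ slightly upward while still fitting $k$ centers, contradicting optimality. Note that both $l_i(\lambda)$ and $r_j(\lambda)$ are monotone functions of $\lambda$ (by Observation~\ref{obs-3.5}), so the quantity $r_j(\lambda)-l_i(\lambda)$ is non-increasing in $\lambda$.

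Next, for each ordered pair $(i,j)$ of forbidden-interval endpoints, I would form the sorted candidate array $\{\lambda_1^{(i,j)},\lambda_2^{(i,j)},\ldots,\lambda_k^{(i,j)}\}$, where $\lambda_t^{(i,j)}$ is the root of the equation $r_j(\lambda)-l_i(\lambda)=t\lambda$. Each $\lambda_t^{(i,j)}$ can be evaluated in $O(1)$ time (the endpoints are low-degree algebraic functions of $\lambda$), and monotonicity of $r_j(\lambda)-l_i(\lambda)$ guarantees $\lambda_t^{(i,j)}\ge\lambda_{t+1}^{(i,j)}$, so each array is sorted. This gives at most $M=O(n^2)$ implicit sorted arrays, each of length $N=k$, and by the observation above $\lambda^*$ appears as some entry.

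Now I apply Lemma~\ref{lem-20} with feasibility test $D(\lambda)$ equal to the \DCCOFL\ algorithm, which runs in $O(n\log^2 n+n\log k)$ time. The feasibility property is satisfied because the maximum number of centers placeable with mutual distance $\ge\lambda$ is non-increasing in $\lambda$, so any $\lambda\ge\lambda^*$ is feasible and any $\lambda<\lambda^*$ is infeasible. Lemma~\ref{lem-20} yields $\lambda^*$ using $O(\log(n^2+k))$ feasibility tests and $O(n^2\log k)$ additional time for the representative-element computations and weighted-median selection. Summing gives
\[
O(n^2\log k)+O\bigl((n\log^2 n+n\log k)\log(n^2+k)\bigr)=O\bigl(n^2\log k+n(\log^2 n+\log k)\log(n^2+k)\bigr),
\]
which is the claimed bound.

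The only subtle point I anticipate is verifying that the observation characterizing $\lambda^*$ truly holds on the circle, since the wrap-around constraint $d(c_k,c_1)\ge\lambda/\alpha$ could in principle be the unique tight constraint and might not correspond to a pair $(l_i,r_j)$ of forbidden-interval endpoints. The resolution is to recall (as used implicitly in \DCCOFL) that some optimal placement can be assumed to put a center on an endpoint of a feasible interval; together with the greedy jumping structure, whichever gap between consecutive centers is tight at $\lambda^*$ can be traced back to a constraint of the form $r_j(\lambda^*)-l_i(\lambda^*)=t\lambda^*$ for some integer $t\le k$ and some endpoint pair. Once that reduction is in hand, the rest of the argument is a direct application of the matrix search machinery already developed.
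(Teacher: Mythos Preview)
Your proposal is correct and follows essentially the same approach as the paper: build $O(n^2)$ implicit sorted arrays of length $k$ from the endpoint-pair equations $r_j(\lambda)-l_i(\lambda)=t\lambda$, then apply Lemma~\ref{lem-20} with the \DCCOFL\ procedure as the feasibility test to obtain the stated bound. If anything, you are more careful than the paper in flagging and resolving the circular wrap-around subtlety for the analogue of Observation~\ref{obs-2}; the paper simply asserts that the same matrix search technique goes through.
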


\section{Minsum Obnoxious Facility Location Problem}

In this section, we discuss the \MOFL\ problem.
We solve the \MOFL\ problem by reducing it to the $k$-link shortest path problem.

\paragraph{$k$-link shortest path:}
Let $G = (V, E)$ be an edge weighted, complete, DAG with the
vertex set $V = \{v_l, v_2, \cdots ,v_n\}$.
For $1 \le i <j \le n$, we use $w(i,j)$ to denote the weight of the edge $(i,j)$.
For a path that contains exactly $k$ links (i.e., edges), we call the path a $k$-link path.
For any two vertices $i,j$, the minimum $k$-link path from $i$ to $j$ is the path from $i$ to $j$  that contains exactly $k$ links and has the minimum total weight among all such paths.

If the weights satisfy the following convex Monge property, the minimum $k$-link short path problem
can be solved more efficiently \cite{aggarwal1993finding}.
For a weighted DAG $G$, $G$ satisfies the convex Monge property if for all $1 \le i <j \le n$, the inequality $w(i,j) + w(i + 1, j + 1) \ge w(i, j + 1) + w(i + 1, j)$ holds.

\begin{theorem} \cite{aggarwal1993finding}
The minimum k-link path that satisfies the convex Monge property can be solved in $O(n \alpha(n) \log ^ 3 n)$ time. 
\label{thm-3}
\end{theorem}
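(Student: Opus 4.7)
The plan is to reduce the $k$-link constraint to an unconstrained shortest-path computation via a Lagrangian-type parameter, and then apply parametric search. Given the weighted DAG $G$, define $G(\tau)$ to be the DAG on the same vertex set whose edge weight is $w(i,j)+\tau$ for every edge $(i,j)$. First I would verify the two structural facts that make the Lagrangian approach work: (i) adding a constant $\tau$ to every edge weight preserves the convex Monge property, since the defining inequality $w(i,j)+w(i+1,j+1)\ge w(i,j+1)+w(i+1,j)$ is invariant under uniform shifts; and (ii) if a minimum-weight $1$-to-$n$ path in $G(\tau)$ uses exactly $k$ links, then the same path, viewed in $G$, is a minimum-weight $k$-link path from $1$ to $n$, because any other $k$-link path would have the same additive penalty $k\tau$ and hence smaller weight in $G(\tau)$ as well. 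The monotonicity fact that the link-count of the optimal path in $G(\tau)$ is nonincreasing in $\tau$ justifies searching for $\tau$ monotonically.

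Next I would explain how to find the correct $\tau$. In principle a binary search on $\tau$ works, but the set of critical values of $\tau$ is large and implicitly defined, so I would use Megiddo's parametric search paradigm reviewed in Section~\ref{psearch}. The \emph{test algorithm} would be Grossberg's parallel shortest-path algorithm for DAGs satisfying the convex Monge property, which computes the minimum-weight $1$-to-$n$ path in $G(\tau)$ in $T_P=O(\log^2 n)$ parallel steps on $P=O(n)$ processors. The \emph{second algorithm} that resolves each comparison encountered during the simulation would be Klawe and Kleitman's sequential SMAWK-style algorithm, which computes the minimum-weight path in $G(\tau)$ in $T_A=O(n\alpha(n))$ time. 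Each comparison arising during the parallel simulation is essentially a test of the sign of a low-degree polynomial in $\tau$, and can be resolved by running the decision algorithm at a concrete value of $\tau$ and comparing outputs.

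Plugging into the parametric-search cost formula $O(PT_P+T_PT_A\log P)$ from Section~\ref{psearch} gives a total running time of
\[
O\bigl(n\cdot \log^2 n + \log^2 n \cdot n\alpha(n)\cdot \log n\bigr)=O\bigl(n\alpha(n)\log^3 n\bigr),
\]
which is the claimed bound. Throughout, I would use the binary-search organization of comparisons inside each parallel step, so that all comparisons from one step are batched and resolved by $O(\log n)$ invocations of the sequential algorithm; this is exactly where the extra $\log n$ factor in the final bound comes from.

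The main obstacle is the parametric-search step: one must argue that Grossberg's parallel algorithm really is \emph{sufficiently oblivious} that its control flow can be driven by the batch-simulation scheme (its branching decisions can all be phrased as comparisons of low-degree polynomials in $\tau$), and that the convex Monge property is preserved along the way so Klawe-Kleitman's fast decision algorithm remains applicable at every simulated $\tau$. Once these two invariants are established, the arithmetic above closes the proof, and the theorem follows from the analysis in~\cite{aggarwal1993finding}.
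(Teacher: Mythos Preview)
Your proposal is correct and follows essentially the same approach as the paper's overview in Section~2.4: define $G(\tau)$ by shifting all edge weights by $\tau$, observe that an optimal path in $G(\tau)$ with exactly $k$ links is an optimal $k$-link path in $G$ and that the optimal link count is monotone in $\tau$, and then apply Megiddo's parametric search using Grossberg's $O(\log^2 n)$-time, $O(n)$-processor parallel algorithm as the test algorithm and Klawe--Kleitman's $O(n\alpha(n))$ sequential algorithm as the decision routine. Your additional remarks (that the uniform shift preserves the convex Monge property, and the explicit arithmetic yielding $O(n\alpha(n)\log^3 n)$ from the parametric-search cost formula) are correct elaborations of exactly this sketch.
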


Next, we demonstrate that \MOFL\ can be reduced to the $k$-link shortest path problem with 
convex Monge property. To do this, we first demonstrate the reduction for a simplified version of the \MOFL\ problem, denoted by \SMOFL, 
where we remove the requirement that the distance between two centers is at least $\alpha \lambda$.

\subsection{\SMOFL\ problem}
For each point $p_i$ in $P$, we can calculate the corresponding influence interval $f_i = (l_i, r_i)$ such that a disk with radius $\lambda$ can cover point $p_i$ if and only if it is centered on $(l_i, r_i)$. We denote the set containing all influence intervals as $F=\{f_i |  i \in [n]\}$. For ease of notation, without loss of generality, we suppose all $l_i$ and $r_i$ ($i \in [n]$) are distinct, $w_i > 0$($i \in [n]$) and $p < l_i \le r_i < q$($i \in [n]$).

\begin{definition}\SMOFL\ problem. 

We are given a set of $n$ influence intervals $F=\{f_i |  i \in [n]\}$ (in increasing order), an integer $k$, and a segment $\overline{pq}$. Each interval in $F$ is assigned with a weight $w_i$. Our goal is to place $k$ centers $p_1,\cdots,p_k$ on $\overline{pq}$ such that 
$$
\sum_{j=1}^k \sum_{\{i : \exists i,p_i \in (l_j,r_j)\}} w_j
$$
is minimized, i.e., the total weight of the intervals on which the centers lie is minimized.

\end{definition}

\begin{observation}
\label{obs-123}
There exists an optimal solution for the \SMOFL\ problem such that all the centers are placed on the endpoints in $F$.
\end{observation}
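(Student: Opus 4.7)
The plan is to start from an arbitrary optimal placement and show that each center can be slid onto an endpoint of some interval in $F$ without increasing the objective. The key fact the argument exploits is that every influence interval $f_i=(l_i,r_i)$ is \emph{open}.

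Fix an optimal placement $c_1<c_2<\cdots<c_k$, and analyze one center $c_j$ at a time with the others frozen. Let $E=\{l_i:i\in[n]\}\cup\{r_i:i\in[n]\}$; by the distinctness assumption stated just before the observation, $|E|=2n$. As $c_j$ moves along $\overline{pq}$, the set of intervals containing $c_j$ changes only when $c_j$ crosses a point of $E$, so the cost, viewed as a function of $c_j$'s position alone, is piecewise constant with breakpoints exactly at $E$.

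Next, suppose $c_j\in(a,b)$ for two consecutive points $a<b$ of $E$. I would claim that shifting $c_j$ to $a$ does not increase the cost (the analogous statement for $b$ is symmetric, and one of the two shifts must keep $c_j$ inside $\overline{pq}$). If $a=l_i$ for some $i$, then because $a,b$ are consecutive in $E$ we must have $r_i\ge b$, hence $c_j\in(l_i,r_i)=f_i$ before the shift; but by openness $c_j=l_i\notin f_i$ after the shift, and no other interval's membership status changes. Thus $c_j$'s contribution to the objective drops by $w_i$ (or at least does not increase, depending on the precise multiplicity convention in the cost expression). If instead $a=r_i$, then $c_j\notin f_i$ both before (since $c_j>r_i$) and after (since $r_i\notin f_i$), so the cost is unchanged. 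Iterating this operation on each center not already in $E$ produces an optimal placement with every center in $E$.

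The sole edge case is a center $c_j$ lying strictly to the left of $\min E=l_1$ (or strictly to the right of $\max E$); such a center is already outside every interval and contributes $0$, and it can be relocated to $\min E$ (respectively $\max E$), which by openness lies in no $f_i$. The ``hard part'' is really just recognizing that openness of the $f_i$ is what makes the boundary weakly better than the interior; once that observation is in hand, the case analysis above is routine, and the iteration is valid because each step perturbs only one $c_j$ while the remaining centers are held fixed, so the piecewise-constant analysis applies cleanly at every step.
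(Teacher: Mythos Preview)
The paper states this observation without proof, so there is no argument there to compare against; your proposal supplies exactly the justification the paper omits.

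Your sliding argument is correct. The three ingredients you rely on are all valid under the standing assumptions (distinct endpoints, $w_i>0$, $p<l_i<r_i<q$): the cost, as a function of a single center's position with the others frozen, is piecewise constant with breakpoints precisely at $E$; moving that center from an open cell $(a,b)$ to either boundary point can only shed intervals and never acquire one (indeed, any $f_m$ containing $a$ must have $l_m<a$ and $r_m>a$, hence $r_m\ge b$ by consecutiveness, so $f_m$ already contained every point of $(a,b)$); and the extreme endpoints $\min E$, $\max E$ lie in no $f_i$, which handles the boundary case. One small remark: in your case $a=l_i$ the shift actually \emph{strictly} decreases the cost since $w_i>0$, so from a genuinely optimal starting placement that case cannot arise; this does not affect the logic, because your per-step invariant is only ``cost does not increase,'' which is exactly what the existential conclusion needs.
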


Next, we reduce this to the $k$-link shortest path problem. We build DAG $G = (V, E)$ as follows. $V$ is the set that contains all the endpoints in $F$ and endpoints of segment $\overline{pq}$, i.e., $V=\{l_1,r_1,l_2,r_2,\cdots,l_n,r_n,p,q \}$. We index the nodes in $V$ based on the coordinates of the node in ascending order and denote the index of node $x \in V$ by \textsf{index(x)}($1\le \textsf{index(x)}\le 2n + 2$). $E$ contains all the edges $x \rightarrow y$ such that $1\le x < y \le 2n + 2$. The weight for edge $x \rightarrow y$ is the opposite value of the total weight of the influence intervals between node $x$ and node $y$, i.e., $w(x, y) = -\sum\limits_{j=1}^k w_j \cdot [x \le \textsf{index($l_j$)} \le \textsf{index($r_j$)} \le y] $. In particular, $w(1, x) = w(x, 2n + 2) = 0(2 \le x \le 2n + 1)$. The nodes in the k-link shortest path are where we place the obnoxious facilities(except $p$ and $q$). We define the total weight of the influence intervals in $F$ as the \textsf{sum}. The minimum total weight of the $(k+1)$-link path of $G$ from $p$ to $q$ adding \textsf{sum} is the answer of the corresponding \SMOFL\ problem. 

Next, we prove that $G$ satisfies the convex Monge property.

\begin{theorem}
\label{thm-123}
$G$ satisfies the convex Monge property.
\end{theorem}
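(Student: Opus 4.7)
The plan is to rephrase the convex Monge inequality in terms of positive quantities and then argue per-interval. Define $W(x,y) = -w(x,y) = \sum_{t} w_t \cdot \mathbb{1}\bigl[x \le \textsf{index}(l_t) \le \textsf{index}(r_t) \le y\bigr]$, i.e., the total weight of those influence intervals in $F$ whose \emph{both} endpoints have index lying in $[x,y]$. Since $w=-W$, the inequality $w(i,j) + w(i+1,j+1) \ge w(i,j+1) + w(i+1,j)$ for all $1 \le i < j \le 2n+1$ becomes the equivalent statement
\[
W(i, j+1) + W(i+1, j) \;\ge\; W(i, j) + W(i+1, j+1).
\]

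Next, I would expand each $W$ as a sum over the $n$ influence intervals $f_t$ and analyze one interval at a time. Fix $t$ and write $a=\textsf{index}(l_t)$, $b=\textsf{index}(r_t)$. The contribution of $f_t$ to the left-minus-right side of the display is $w_t$ times
\[
\bigl(\mathbb{1}[i\le a,\, b\le j+1] - \mathbb{1}[i\le a,\, b\le j]\bigr) \;-\; \bigl(\mathbb{1}[i+1\le a,\, b\le j+1] - \mathbb{1}[i+1\le a,\, b\le j]\bigr).
\]
The first bracket equals $\mathbb{1}[i\le a,\, b=j+1]$ and the second equals $\mathbb{1}[i+1\le a,\, b=j+1]$, so their difference collapses cleanly to $\mathbb{1}[a=i,\, b=j+1]$. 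This is the key algebraic step; it is just a double telescoping, but it is where all of the combinatorics sits, so I would verify it carefully by case analysis on the relative position of $(a,b)$ with respect to the window corners $(i,j)$ and $(i+1,j+1)$.

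Finally, summing the per-interval contributions and using $w_t > 0$,
\[
\bigl(W(i,j+1)+W(i+1,j)\bigr) - \bigl(W(i,j)+W(i+1,j+1)\bigr) \;=\; \sum_{t : a_t = i,\, b_t = j+1} w_t \;\ge\; 0,
\]
which establishes the desired inequality for every $i<j$. I expect no real obstacle beyond bookkeeping: the only mildly subtle point is that the telescoping identity above holds for every $(a,b)$ with $a\le b$ (including the degenerate cases where $a<i$, $a>j+1$, $b<i$, or $b>j+1$, in which both sides of the identity are $0$), so one should either handle those corner cases explicitly or note that the indicators automatically vanish. Once that is in hand, Theorem~\ref{thm-123} follows, and combined with Theorem~\ref{thm-3} it yields the advertised running time for \SMOFL.
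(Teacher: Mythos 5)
Your proof is correct and follows essentially the same route as the paper's: both decompose the Monge second difference interval by interval via telescoping, the paper stopping at the monotonicity $W(i,j)\ge W(i,j+1)$ of a first difference while you carry the computation one step further to the exact identity that the second difference equals the total weight of intervals with endpoints exactly at $(i,j+1)$. Your version handles the signs more transparently (the paper's remark that ``the weight of each influence interval is negative'' is an artifact of its sign convention), but the underlying argument is the same.
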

\begin{proof}
To demonstrate that graph $G$ satisfies the convex Monge property, we must prove $$w(i,j) + w(i + 1, j + 1) \ge w(i, j + 1) + w(i + 1, j)$$

We define $W(i,j)$ as the total weight of the influence intervals $f_i \in F$ such that $\textsf{index($l_i$)} = i$ and $\textsf{index($r_i$)} \le j$. Next, we can obtain 
$$w(i,j) = w(i+1,j) + W(i,j)$$
$$w(i,j + 1) = w(i+1,j+1) + W(i,j+1)$$

clearly,$W(i,j) \ge W(i, j+1)$ (note that the weight of each influence interval in $F$ is negative).
Therefore, we obtain $w(i,j) + w(i + 1, j + 1) \ge w(i, j + 1) + w(i + 1, j)$.
\end{proof}

\subsection{Solving the \MOFL\ problem}
The only difference between the \SMOFL\ problem and the \MOFL\ problem 
is that the distance between each pair of centers should be at least $\alpha \lambda$ in the \MOFL\ problem. 

It is possible that we cannot determine an optimal solution by only placing
centers on endpoints in $F$ under the condition that the distance between each pair of centers is least $\alpha \lambda$ (see Figure \ref{fig:place} for an example). 
Hence, we require the following lemma.

\begin{lemma}
We can find an optimal solution such that we only place centers on endpoints in $F$ and the place such that the distance from this place to an endpoint in $F$ can be divided by $\alpha \lambda$, e.g., $l_i + k\alpha \lambda$. 
\end{lemma}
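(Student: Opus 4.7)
The strategy is as follows. Given any optimal placement $c_1<c_2<\cdots<c_k$, I will modify it --- without ever increasing the objective --- until every center sits at a position of the form $l_i+j\alpha\lambda$ or $r_i+j\alpha\lambda$ for some $i\in[n]$ and integer $j$. The central structural object is a \emph{chain}: a maximal subsequence $c_a,c_{a+1},\ldots,c_b$ of consecutive centers whose pairwise gaps are exactly $\alpha\lambda$. Within a chain all gaps are tight, so the chain acts as a rigid block: shifting any single center forces every other center in the chain to shift by the same amount, whereas distinct chains are separated by strict inequalities and may be translated independently within some nontrivial range.

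For any chain $C$, I would define its \emph{sliding range} $[\delta^-,\delta^+]$ as the set of uniform translations $\delta$ of $C$ that keep $C$ inside $\overline{pq}$ and preserve a gap of at least $\alpha\lambda$ from the neighbouring chains. Holding all centers outside $C$ fixed, the objective viewed as a function of $\delta$ on $[\delta^-,\delta^+]$ is piecewise constant, and its breakpoints are exactly the values of $\delta$ at which some center of $C$ coincides with an interval endpoint $l_i$ or $r_i$. The openness of the influence intervals $(l_i,r_i)$ ensures that a center sitting exactly at an endpoint is not counted as inside, so each breakpoint belongs to the ``lower'' of the two pieces it separates. By optimality of the initial placement, $\delta = 0$ achieves the minimum of this function; the minimum piece containing $\delta=0$ is a sub-interval whose boundary consists either of a breakpoint or of an endpoint of $[\delta^-,\delta^+]$. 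If either boundary of this minimum piece is a breakpoint, I would shift $C$ there, placing one center of $C$ at an $F$-endpoint and, by rigidity, every remaining center of $C$ at that endpoint plus an integer multiple of $\alpha\lambda$ --- exactly the required form.

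The remaining case, which I expect to be the main obstacle, is when the minimum piece extends all the way to $\delta^-$ or $\delta^+$ without encountering any breakpoint. Shifting $C$ to such a boundary either makes $C$ touch a neighbouring chain --- in which case I would merge the two chains into a single longer rigid block and recurse on the merged chain, an operation that strictly decreases the number of chains and hence terminates after at most $k-1$ repetitions --- or it makes an endpoint of $C$ reach $p$ or $q$ without any breakpoint ever being encountered. In the latter situation the entire chain $C$ contributes nothing to the objective throughout its sliding range, and the plan is to exploit this freedom by re-anchoring $C$ at a snapped configuration of zero contribution, either by repositioning the leftmost center of $C$ at the nearest endpoint $l_i$ (using rigidity for the rest of the chain) or by adjoining $p$ and $q$ as honorary anchor endpoints, which adds only $O(k)$ extra candidate positions and therefore does not affect the $O(nk)$ total needed for the subsequent $k$-link shortest path reduction. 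Iterating this chain-by-chain procedure produces an optimal solution in which every center lies at a position of the required form, establishing the lemma.
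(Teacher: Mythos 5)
Your argument is essentially correct, but it takes a noticeably heavier route than the paper's. The paper's proof is a two-sentence left-shift induction: take the first center $c_i$ not yet at a position of the required form and slide it leftward; since the objective, viewed as a function of $c_i$'s position alone, is piecewise constant with breakpoints only at endpoints of influence intervals, and since the original placement is optimal, the slide cannot change the objective until $c_i$ reaches an interval endpoint or reaches distance exactly $\alpha\lambda$ from $c_{i-1}$, which is already anchored by induction. Your chain decomposition, rigid translation, and merge-and-recurse machinery proves the same exchange idea but organizes it around maximal tight blocks rather than single centers; the cost is extra bookkeeping for termination (your potential argument via the number of unanchored chains does work, since merging an unanchored chain into an anchored one immediately anchors it), and the benefit is that you explicitly confront the boundary case in which a block slides to $p$ or $q$ without ever meeting a breakpoint --- a case the paper's lemma statement and proof silently omit (the candidate set must indeed be enlarged by $p + j\alpha\lambda$, and with your bidirectional sliding also by $q - j\alpha\lambda$ and by endpoints minus multiples of $\alpha\lambda$, all of which still total $O(nk)$ and so do not hurt the $k$-link reduction).

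Three local claims in your writeup are false but repairable. First, a chain is not literally rigid: its extremal centers can move outward independently; restricting attention to rigid translations is nevertheless a legitimate family of perturbations, which is all the exchange argument needs. Second, in the $p$/$q$ case the chain's contribution is merely \emph{constant} over the sliding range, not necessarily zero. Third, your alternative of ``repositioning the leftmost center at the nearest endpoint $l_i$'' cannot work: by hypothesis no center of the chain meets any endpoint anywhere in the sliding range, so that reposition would exit the feasible range. You must take your other branch and anchor the chain at $p$ or $q$, which is justified precisely because the objective is constant there.
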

\begin{proof}
Note that for any solution to the \MOFL\ problem, the answer only changes when we move a center out of or into an influence interval. For an optimal solution, suppose $c_i$ is the first center that does not place in the place mentioned above. Then, we can move $c_i$ to the leftmost position such that it is either on an endpoint of influence interval in $F$ or simply $k \alpha \lambda$ (for an integer $k>0$) distance after the previous center without changing the answer.
\end{proof}

Therefore, for each position mentioned above, we add $k-1$ extra position that just $\alpha \lambda$ after it one by one (because we place $k$ centers totally). For example, for $l_i$ in $F$, we also add $l_i+\alpha \lambda$, $l_i+2\alpha \lambda,\cdots,l_i+(k-1)\alpha \lambda$ to $V$. 

To maintain the distance between each pair of centers at least 
$\alpha \lambda$, if the distance between node $x$ and node $y$ is greater than $\alpha \lambda$, we set  $w(x, y) = -\infty$.

Finally, we reduce the \MOFL\ problem to the k-link shortest path problem with $O(nk)$ nodes. We demonstrate the result in the following theorem.

\begin{figure}[h!]
  \includegraphics[width=1\textwidth]{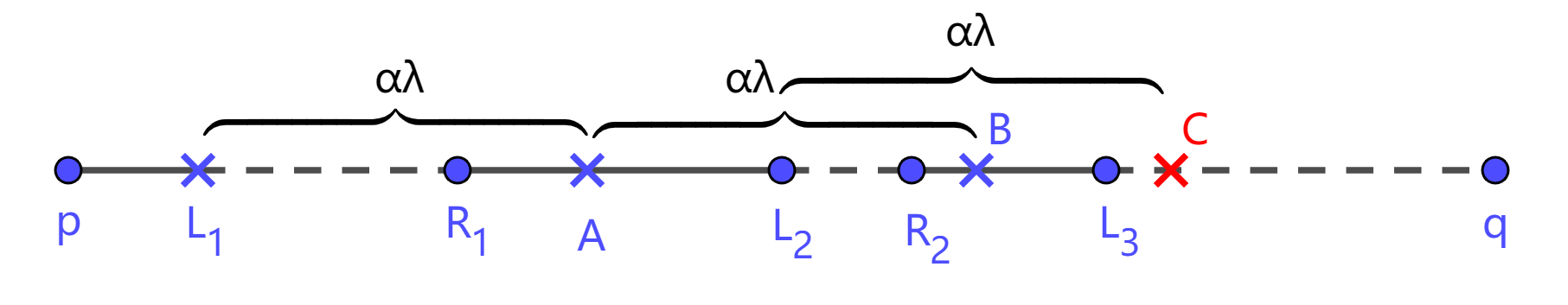}
  \caption{If we only place centers on endpoints in $F$, we can only place two centers ($L_1$ and $L_2$). The optimal solution can place three centers (i.e., $L_1$, $A$, $B$).}
  \label{fig:place}
\end{figure}

\begin{theorem}
We can solve the $\MOFL$ problem in $O(nk\cdot \alpha(nk) \log^3 {nk})$ time using the k-link shortest path.
\label{thm-4}
\end{theorem}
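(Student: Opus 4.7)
The plan is to bundle the reduction already described into a clean instance of the $k$-link shortest path problem with the convex Monge property, and then to invoke Theorem~\ref{thm-3} (Aggarwal's algorithm) as a black box. Three tasks must be completed: bound the size of the reduced graph, verify the convex Monge property on the enlarged vertex set, and plug the resulting bounds into Theorem~\ref{thm-3}.

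First I would count vertices. The SMOFL construction contributes $2n+2$ endpoints (the $l_i, r_i$ together with $p,q$). The MOFL reduction augments this set by taking, for each such endpoint $x$, the extra candidate positions $x+\alpha\lambda, x+2\alpha\lambda, \ldots, x+(k-1)\alpha\lambda$; since we place at most $k$ centers, these suffice by the preceding lemma. This gives $|V'| = O(nk)$, so the reduced DAG $G'$ has $N = O(nk)$ nodes, and the desired $(k+1)$-link shortest path in $G'$ from $p$ to $q$ encodes the optimal placement (after adding the constant $\textsf{sum}$).

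Second I would verify that $G'$ still satisfies the convex Monge property. The crucial observation is that the weight $w(x,y)$ is the negative total weight of influence intervals entirely contained in $[x,y]$, which depends only on the \emph{coordinates} of $x$ and $y$, not on whether they are original endpoints of $F$ or newly inserted positions $x_0+t\alpha\lambda$. Therefore the case analysis of Theorem~\ref{thm-123} applies verbatim on any quadruple $v_i<v_{i+1}<v_j<v_{j+1}$ in $V'$: an interval $(l_s,r_s)$ is counted equally on both sides of the Monge inequality except when $l_s\in[v_i,v_{i+1})$ and $r_s\in(v_j,v_{j+1}]$, in which case it contributes $1$ to the right side and $0$ to the left, yielding
\[
w(i,j)+w(i+1,j+1)\ \ge\ w(i,j+1)+w(i+1,j).
\]
The distance constraint $d(c_i,c_{i+1})\ge\alpha\lambda$ is enforced by assigning weight $+\infty$ to any edge $(x,y)$ with $y-x<\alpha\lambda$. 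These forbidden pairs form a monotone ``staircase'' (since $d(v_a,v_b)$ is monotone in each index), so the four-tuple check reduces to verifying the inequality on the boundary of the staircase; case analysis on whether $d(v_{i+1},v_j)$, the smallest of the four distances, lies below or above $\alpha\lambda$ closes out this verification.

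Finally, with convex Monge in hand and $N=O(nk)$, Theorem~\ref{thm-3} produces the minimum $(k+1)$-link path in $G'$ in $O(N\,\alpha(N)\log^3 N) = O(nk\cdot\alpha(nk)\log^3(nk))$ time, which upon adding $\textsf{sum}$ gives the optimal MOFL objective and proves the claimed running time. The main obstacle will be the second step: while the ``interval'' part of the Monge inequality is inherited from Theorem~\ref{thm-123} with no new work, care is needed to confirm that the $+\infty$ forbidden entries respect the Monge inequality on every 4-tuple, and in particular that the staircase structure of the forbidden region is compatible with the input requirements of Aggarwal's algorithm.
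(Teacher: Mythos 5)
Your reduction is the paper's reduction: the same $O(nk)$ candidate positions (endpoints of $F$ together with their shifts by $t\alpha\lambda$, $1\le t\le k-1$, justified by the preceding lemma), the same negated-coverage edge weights, the same $(k+1)$-link path from $p$ to $q$ plus the constant $\textsf{sum}$, and the same appeal to Theorem~\ref{thm-3}. Your observation that the interval part of the weight satisfies the convex Monge inequality on the enlarged vertex set because $w(x,y)$ depends only on coordinates is correct and is exactly the argument of Theorem~\ref{thm-123}; the vertex count is also right. The paper treats the spacing constraint even more tersely than you do (it assigns $-\infty$ to edges \emph{longer} than $\alpha\lambda$, evidently a slip for the $+\infty$-on-short-edges rule you state).

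The step that does not go through is precisely the one you flag as the main obstacle and then claim to close by case analysis: the $+\infty$ entries do \emph{not} respect the convex Monge inequality. Take $v_i=0$, $v_{i+1}=1$, $v_j=1.5$, $v_{j+1}=2.5$ and $\alpha\lambda=1$. Only the near-diagonal edge $(i+1,j)$ is forbidden, so $w(i,j)+w(i+1,j+1)$ is finite while $w(i,j+1)+w(i+1,j)=+\infty$, and the required inequality $w(i,j)+w(i+1,j+1)\ge w(i,j+1)+w(i+1,j)$ fails. The staircase structure you invoke guarantees exactly that this configuration (shortest edge forbidden, the other three allowed) occurs, so the case split on $d(v_{i+1},v_j)$ exposes a counterexample rather than completing the verification. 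Note that forbidding short edges by $+\infty$ is compatible with the \emph{concave} Monge inequality ($\le$), but Theorem~\ref{thm-123} establishes the convex direction ($\ge$) and Theorem~\ref{thm-3} as stated requires it; moreover no penalty of the form $g(v_b-v_a)$ that vanishes for $v_b-v_a\ge\alpha\lambda$ and is positive below can be convex Monge in this direction, so the fix is not merely a matter of choosing a better finite penalty. This gap is shared by the paper, which never re-establishes the Monge property after introducing the infinite weights; you have surfaced a genuine issue, but your proposal does not resolve it.
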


\section{Conclusion}
In this thesis, we studied four versions of the obnoxious facility problem restricted to a line segment or circle and obtained improved results for these problems. We provided an efficient solution for the decision versions of \COFL, \SCOFL, and \CCOFL. Using this, we improved the time complexity for solving \COFL, \SCOFL, and \CCOFL. We also improved the time complexity of the \MOFL\ problem by using the k-link shortest path. 

Our results for \COFL, \SCOFL, \CCOFL, and \MOFL\ obtained the best known results for these problems.
We expect to see more applications of and further work on our results.

\section{Acknowledgements}
The author would like to thank Jian Li and Haitao Wang for several helpful discussions.
The research is supported in part by the National Natural Science Foundation of China Grant 62161146004, Turing AI Institute of Nanjing and Xi'an Institute for Interdisciplinary Information Core Technology.

\bibliographystyle{IEEEtran}
\bibliography{ref}
\end{document}